\let \originalleft \left
\let\originalright\right
\renewcommand{\left}{\mathopen{}\mathclose\bgroup\originalleft}
\renewcommand{\right}{\aftergroup\egroup\originalright}
\title{Parallelism in Randomized Incremental Algorithms}
\author{
  Guy E. Blelloch\\CMU\\guyb@cs.cmu.edu \and
  Yan Gu\\CMU\\yan.gu@cs.cmu.edu \and
  Julian Shun\\MIT CSAIL\\jshun@mit.edu \and
  Yihan Sun\\CMU\\yihans@cs.cmu.edu
}
\date{}
\begin{document}

\maketitle

\begin{abstract}
  In this paper we show that many sequential randomized incremental
  algorithms are in fact parallel.  We consider algorithms for several
  problems including Delaunay triangulation, linear programming,
  closest pair, smallest enclosing disk, least-element lists, and
  strongly connected components.

We analyze the dependences between iterations in an algorithm, and
show that the dependence structure is shallow with high probability,
or that by violating some dependences the structure is shallow and the
work is not increased significantly.  We identify three types of
algorithms based on their dependences and present a framework for
analyzing each type.  Using the framework gives work-efficient
polylogarithmic-depth parallel algorithms for most of the problems
that we study.
% Some of these algorithms are
% straightforward (e.g., sorting and linear programming), while others
% are more novel and require more effort to obtain the desired bounds
% (e.g., Delaunay triangulation and strongly connected components).

This paper shows the first incremental Delaunay triangulation algorithm with optimal work and polylogarithmic depth, which is an open problem for over 30 years. This result is important since most implementations of parallel Delaunay triangulation use the incremental approach. Our results also improve bounds on strongly connected components and least-elements lists, and significantly simplify parallel algorithms for several problems.
\end{abstract}

\section{Introduction}
The randomized incremental approach has been a very
useful
paradigm for generating simple and efficient algorithms for a variety
of problems.  There have been many dozens of papers on the topic (e.g., see
the surveys~\cite{Seidel93,Mulmuley94}).  Much of the early work was
in the context of computational geometry, but the approach has also been
applied to graph algorithms~\cite{cohen1997,coppersmith2003}.  The main idea is to insert
elements one-by-one in random order while maintaining a desired
structure.  The random order ensures that the insertions are somehow spread
out, and worst-case behaviors are unlikely.

The incremental process appears sequential since it is iterative,
but in practice incremental algorithms are widely used in parallel
implementations by allowing some iterations to start in parallel and
using some form of locking to avoid conflicts.  Many parallel
implementations for Delaunay triangulation and convex hull, for
example, are based on the randomized incremental
approach~\cite{Cignoni,Diaz04,Cintra2004,Llanos05,BBK06,Gonzalez2006,Stamp08,Pingali11,SBFG}.
In theory, however, there are still no known bounds
for parallel Delaunay triangulation using the incremental approach,
nor for many other problems.

In this paper we show that the incremental approach for Delaunay
triangulation, and
many other problems, is indeed parallel and leads to work-efficient
polylogarithmic-depth (time) algorithms for the problems.  The results
are based on analyzing the dependence graph (more accurately the
distribution of dependence graphs over the random order).  This
technique has recently been used to analyze the parallelism available
in a variety of sequential algorithms, including the simple greedy
algorithm for maximal independent set~\cite{BFS12}, the Knuth shuffle
for random permutation~\cite{SGBFG2015}, greedy graph
coloring~\cite{hasenplaugh2014ordering}, and correlation
clustering~\cite{PanPORRJ15}.  The advantage of this method is that
one can use standard sequential algorithms with modest change to make
them parallel, often leading to very simple parallel solutions.  It
has also been shown experimentally that the incremental approach leads to
practical parallel algorithms~\cite{BFGS}, and to deterministic
parallelism~\cite{Bocchino09,BFGS}.

\begin{table*}[t]
\centering
\def\arraystretch{1.1}
\begin{tabular}{c@{ }@{ }c@{ }@{ }c@{ }@{ }c}
\toprule
Problem & Work  & Depth & Type \\ \midrule
Comparison sorting (Section~\ref{sec:sorting})
& $O(n\log n)$ & $O(\log n)$ & 1\\
%
%% Delaunay triangulation, $d = 2$ (Section~\ref{sec:delaunay})
%% & $O(n\log n)$
%% & $O(\log n \log\log n)$ & 1\\
%
%% Delaunay triangulation, $d > 2$ (Section~\ref{sec:delaunay})
%% & $O(n^{\lceil d/2 \rceil})\dagger$
%% & $O(d \log n \log\log n)$ & 1\\
%
Delaunay triangulation, $d$ dims. (Section~\ref{sec:delaunay})
& $O(n \log n + n^{\lceil d/2 \rceil})\dagger$
& $O(d \log n \log^* n)$ & 1\\
2D linear programming (Section~\ref{sec:lp})
& $O(n)\dagger$ & $O(\log n)$ & 2\\
2D closest pair (Section~\ref{sec:cp})
& $O(n)\dagger$ & $O(\log n\log^*n)$ & 2\\
Smallest enclosing disk (Section~\ref{sec:sed})
& $O(n)\dagger$ & $O(\log^2n)$ & 2\\
Least-element lists (Section~\ref{sec:LE-lists})
& $O(W_{\smb{SP}}(n,m)\log n)\dagger$ & $O(D_{\smb{SP}}(n,m)\log n)$
& 3\\
Strongly connected components (Section~\ref{sec:SCC})
& $O(W_{\smb{R}}(n,m)\log n)\dagger$
& $O(D_{\smb{R}}(n,m)\log n)$ & 3\\
\bottomrule
\end{tabular}
\vspace{5pt}
\caption{Work and depth bounds for our parallel randomized incremental
  algorithms. $W_{\smb{SP}}(n,m)$ and $D_{\smb{SP}}(n,m)$ denote the
  work and depth, respectively, of a single-source shortest paths
  algorithm. $W_{\smb{R}}(n,m)$ and $D_{\smb{R}}(n,m)$ denote the work
  and depth, respectively, of performing a reachability query.  Bounds
  marked with a $\dagger$ are expected bounds, and the rest are high
  probability bounds.   All bounds are for the arbitrary  CRCW PRAM,
  except for comparison sorting that requires the priority CRCW
  PRAM. In all cases the work is the same as the sequential
  incremental algorithm, since the algorithms are effectively
  equivalent beyond either reordering (Type 1 or 2) or some redundancy
  (Type 3).}
\label{table:bounds}
\vspace{-10pt}
\end{table*}

\medskip

The contributions of the paper can be summarized as follows.

\begin{enumerate}[\hspace{-3pt} 1.]\itemsep 2pt %\parskip0pt \parsep0pt \leftmargin1pt
\item We describe a framework for analyzing parallelism in randomized
  incremental algorithms.  We consider three types of dependences
  (Type 1, 2, and 3), and give general bounds on the depth of
  algorithms with for each type (Section~\ref{sec:incalg}).
%Of particular utility is Corollary~\ref{thm:lowdepth}.

\item
We show that randomly ordered insertion into a binary search tree is
inherently parallel, leading to an almost trivial comparison sorting algorithm
taking $O(\log n)$ depth and $O(n \log n)$ work (i.e., $n$ processors),
both with high probability on the priority-write CRCW PRAM (Section~\ref{sec:sorting}).
Surprisingly, we know of no previous description and analysis of this
parallel algorithm.

\item
  We show that an offline variant of Boissonnat and
  Teillaud's~\cite{BT93} randomized incremental algorithm  for
  Delaunay triangulation in $d$ dimensions has dependence depth $O(d \log n)$ with high
  probability (Section~\ref{sec:delaunay}).  We then describe a
  way to parallelize the algorithm, which leads to a parallel version with
  $O(d \log n \log \log n)$ depth with high probability, and $O(n \log n
  + n^{\lceil d /2 \rceil})$ work in expectation, on the CRCW PRAM.
  This is the first incremental construction of Delaunay triangulation with optimal work and polylogarithmic depth.  This problem has been open for 30 years, and is important since most implementations of parallel
Delaunay triangulation use the incremental approach, but none of them have polylogarithmic depth bounds.
  Surprisingly, our algorithm is very simple.

\item
We show that classic sequential randomized incremental algorithms for
constant-dimensional linear programming, closest pair, and smallest
enclosing disk have shallow dependence depth
(Section~\ref{sec:type2}).  This leads to very simple linear-work and
polylogarithmic-depth randomized parallel algorithms for all three
problems.

\item
  We show that by relaxing dependences (i.e., allowing some to be
  violated), two random incremental graph algorithms have (reasonably)
  shallow dependence depth.  The relaxation increases the work, but
  only by a constant factor in expectation.  We apply the approach to
  generate efficient parallel versions of Cohen's algorithm~\cite{cohen1997} for
  least-element lists  (Section~\ref{sec:LE-lists}) and Coppersmith
  et al.'s algorithm~\cite{coppersmith2003} for strongly connected
  components (SCC, Section~\ref{sec:SCC}).  In both cases we improve on the
  previous best bounds for the problems.
  Least-element lists have applications to tree embeddings on graph
  metrics~\cite{FRT,kanat}, and estimating neighborhood sizes in
  graphs~\cite{cohen2004}.  Coppersmith et al.'s SCC algorithm~\cite{coppersmith2003}
  is widely used in practice~\cite{hong2013fast,barnat2011computing,slota_ipdps2014_bfs,Tomkins2015}.
  In this paper, we analyze the parallelism of this algorithm, which had been a long-standing open question. This algorithm was later implemented and shown experimentally to be practical by Dhulipala et al.~\cite{dhulipala2018theoretically}.\footnote{We thank Laxman Dhulipala for catching a mistake in the original conference version of this paper when he was implementing the algorithm.  We have fixed the mistake in this version.}
\end{enumerate}

Other than the graph algorithms, which call subroutines that are known
to be hard to efficiently parallelize (reachability and
shortest paths), all of our solutions are work-efficient and run in
polylogarithmic depth (time).  The bounds for all of our parallel
randomized incremental algorithms can be found in
Table~\ref{table:bounds}. % in the Appendix.

%Parallel computational geometry papers:~\cite{CS89,Goodrich91,CCT92,Goodrich87,Dadoun1987,AG89,Aggarwal88,Atallah1989,RS92,Reif98parallelcomputational,BHMT99}

\subsubsection*{Preliminaries} We analyze parallel algorithms in the
work-depth paradigm~\cite{JaJa92}.  An algorithm proceeds in a
sequence of $D$ (depth) rounds, with round $i$ doing $w_i$ work in
parallel.  The total work is therefore $W = \sum_{i=1}^D w_i$.  We
account for the cost of allocating processors and compaction in our
depth.  Therefore the bounds on a PRAM with $P$ processors is $O(W/P +
D)$ time~\cite{brent1974parallel}.  We use the concurrent-read and
concurrent-writes (CRCW) PRAM model.  By default, we assume the
arbitrary-write CRCW model, but when stated use the priority-write
model.  We say $O(f(n))$ \defn{with high probability (\whp{})} to indicate
$O(kf(n))$ with probability at least $1- 1/n^k$.

\newcommand{\hp}{\hat{p}}
\newcommand{\hpij}{\hat{p}_{ij}}
\newcommand{\pij}{p_{ij}}

\section{Iteration Dependences}
\label{sec:incalg}
An \defn{iterative algorithm} is an algorithm that runs in a sequence
of \defn{\step{}s} (steps) in order.  When applied to a
particular input, we refer to the computation as an \defn{iterative
  computation}.
%Each step~$i$ of an iterative computation does some
% work $W(i)$, and has some depth $D(i)$ (the \step{}s themselves can be
% parallel).
\Step{} $j$ is said to \defn{depend} on \step{} $i < j$ if
the computation of \step{} $j$ is affected by the computation of
\step{} $i$.  The particular dependences, or even the number of
\step{}s, can be a function of the input, and can be modeled as a
directed acyclic graph (DAG)---the \step{}s ($I = {1, \ldots, n}$) are
vertices and dependences between them are arcs (directed edges).

\begin{definition}[\IDG~\cite{SGBFG2015}]
An \textbf{\idg{}} for an
iterative computation is a (directed acyclic) graph $G(I,E)$ such that
if every \step{} $i \in I$ runs after all predecessor \step{}s in $G$
have completed, then every \step{} will do the same computation as in the
sequential order.
\end{definition}

We are interested in the depth (longest directed path) of \idg{} since
shallow dependence graphs imply high parallelism---at least if the
dependences can be determined online, and depth of each \step{} can be
appropriately bounded.  We refer to the depth of the DAG as the
\defn{\idd}, and denote it as $\idepth{G}$.
%In this paper, we are interested in probabilistic bounds on the
%iteration depth over random input orders.

In general there can be \substep{}s nested within each \step{} of an
algorithm.  In this case we can consider the dependences between these
\substep{}s instead of the top-level \step{}s (i.e., a dependence from
the \substep{} in one \step{} to the \substep{} in either the same or
different \step{}).  The \idg{} is defined analogously---dependence
edges go between the \substep{}s, possibly in different top-level
\step{}s.  In this paper, we only consider one such algorithm,
Delaunay triangulation, where the main \step{}s are over the points,
and the \substep{}s are for each triangle created by adding the point.

An \emph{incremental algorithm} takes a sequence of elements (or
objects) $E$, and iteratively inserts then one at a time while
maintaining some property over the elements.  A \defn{randomized
  incremental algorithm} is an incremental algorithm in which the
elements are added in a uniformly random order---each permutation is
equally likely.  In this paper, we are interested in deriving
probability bounds over the \idd{}.
We consider three types of randomized incremental algorithms, which we
refer to as Type 1, 2, and 3, for lack of better names.

\subsection{Type 1 Algorithms}
\label{sec:incalg1}

In these algorithms we show the
probability bounds on \idd{} by considering all possible
paths of dependences.  By bounding the probability of each path, and
bounding the number of possible paths, the union bound can be used to
bound the probability that any path is long.
We use backwards analysis~\cite{Seidel93} to analyze the length
and number of paths.

We say that an incremental algorithm has \emph{$k$-bounded
  dependences} if for any input $E$ and element $e \in E$ inserted
last, $e$ directly depends on at most $k$ other elements---i.e., once
those up to $k$ other elements are inserted, it is safe to insert $e$.
For example, consider sorting by inserting into a binary search tree based on a random order.
For any key $v$ inserted last, once the previous and next keys in
sorted order, $v_p$ and $v_n$, have been inserted, we can immediate
insert $v$.
%\julian{Why is it ``inserted last'' instead of just ``inserted''?}
%\guy{Because if not inserted last then $v_n$ and/or $v_p$ could appear after in the insertion order and the statement is not true.}
In particular the key $v$ will either be the right child
of $v_p$ or the left child of $v_n$ depending on which of the two was
inserted later.  More subtly, the search path for $v$ will also be the
same once $v_p$ and $v_n$ are both inserted (more discussion in
Section~\ref{sec:sorting}).  Inserting into a BST therefore has
$2$-bounded dependences.

If the \step{}s in an incremental algorithm are nested, then we
consider the pairs of an element along with each of its \substep{}s.
We say that an incremental algorithm has \emph{$k$-bounded nested
  dependences} if for any input $E$, any element $e \in E$ inserted
last, and any \substep{} $s$ for $e$, $(e,s)$ directly depends on at
most $k$ possible previous element--\substep{} pairs.  We say ``possible''
here since the \substep{}s might differ depending on the order of the
previous elements.  For example, consider Delaunay triangulation in
$d$ dimensions.  Inserting an element (point) $x$ will run \substep{}s
adding a set of triangles ($d$-simplices).  As shown in
Section~\ref{sec:delaunay}, each new triangle (\substep{}) will depend
on at most two previous triangles.  Each of these previous triangles
could have been added by a \substep{} of any of its $(d+1)$ corner
points, whichever was inserted last.  Hence there are $2 (d + 1)$
possible element--\substep{} pairs that the \substep{} for $x$ could
depend on, and Delaunay triangulation therefore has $2(d+1)$-bounded
nested dependences.

For a given insertion order of all elements, a \emph{tail} is an
element (or element--\substep{} pair for the nested case) that no
other element (or element-\substep{} pair) depends on.  The \emph{tail
  count} is the number of possible tails over all orderings.
Inserting $n$ keys into a BST, for example, has a tail count of $n$
since every key can be a tail.  For Delaunay triangulation every final
triangle can be involved in a tail, and each one created by any of its
corners, depending on which corner is last.  Therefore the tail count is at
most the number of final triangles times $(d+1)$.

We are now interested in the length of dependence paths for
incremental algorithms with $k$ bounded dependences, either nested or
not, and how that limits the \idd{}.

\begin{theorem}
  \label{thm:type1}
  Consider a random incremental algorithm on $n$ elements with
  $k$-bounded (nested) dependences, and for which the tail count is
  bounded by $c n^b$, for  some  constants  $b $ and  $c$.  Over the distribution of \idg{}s $G$, and
  for all $\sigma \geq k e^2$:
\[\Pr(D(G) \geq \sigma H_n) < c n^{-(\sigma - b)} \]
where $H_n = \sum_{i=1}^n 1/i$.
\end{theorem}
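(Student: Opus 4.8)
The plan is to bound the probability that the dependence DAG has depth at least $\sigma H_n$ by a union bound over all directed paths of that length. A directed path in $G$ of length $\ell = \sigma H_n$ is a sequence of elements (or element–substep pairs) $e_{i_1}, e_{i_2}, \ldots, e_{i_\ell}$ where each $e_{i_{t+1}}$ directly depends on $e_{i_t}$, and $i_1 < i_2 < \cdots < i_\ell$ in insertion order. I will bound separately (a) the number of such candidate paths and (b) the probability that a fixed candidate path actually occurs as a path in the random DAG, then multiply.

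**Bounding the probability of a fixed path via backwards analysis.**

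For step (b), I use backwards analysis. Fix a path with top element $e_{i_\ell}$ inserted at position $j_\ell$ in the random order (where $j_\ell \ge \ell$). Condition on the set of the first $j_\ell$ elements and consider revealing which one was inserted last. By the $k$-bounded-dependence hypothesis, $e_{i_\ell}$ depends on at most $k$ of the preceding elements, so the probability that the last-inserted element among the first $j_\ell$ is the required predecessor $e_{i_{\ell-1}}$ is at most $k/j_\ell$. Iterating this backwards down the path, the probability that a specified path of length $\ell$ occurs, where the $t$-th element sits at position $j_t$, is at most $\prod_{t=2}^{\ell} (k/j_t)$. (In the nested case the same argument applies with the substep granularity, using $k$-bounded nested dependences.) Then I sum over the positions: the dominant contribution, after summing over all increasing position sequences $j_1 < \cdots < j_\ell \le n$, is bounded by $k^\ell / \ell! \cdot (H_n)^\ell$ roughly — more precisely, $\sum (1/j_2)\cdots(1/j_\ell)$ over increasing indices is at most $(H_n)^{\ell-1}/(\ell-1)!$, and I fold the leftover factors in. This gives a per-"shape" bound of about $(k H_n)^\ell / \ell!$.

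**Counting paths and combining.**

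For step (a), the number of choices of path shape is controlled by the tail count: the topmost element of the path is a tail of some ordering (nothing depends on it), contributing a factor $cn^b$, and then walking down the path each step has at most $k$ choices of predecessor, contributing $k^{\ell}$. So the total number of candidate paths is at most $c n^b k^{\ell}$. Multiplying by the probability bound and absorbing, the union bound gives
\[
\Pr(D(G) \ge \ell) \;\le\; c n^b \cdot \frac{(k^2 H_n)^{\ell}}{\ell!}.
\]
Now I substitute $\ell = \sigma H_n$ and apply $\ell! \ge (\ell/e)^\ell$, so $(k^2 H_n)^\ell/\ell! \le (e k^2 H_n / \ell)^\ell = (e k^2 / \sigma)^\ell$. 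Since the hypothesis gives $\sigma \ge k e^2$, we have $e k^2/\sigma \le 1/e \cdot k \le$ something at most $1/e$ when combined carefully — the clean bound is $(e k^2/\sigma)^{\sigma H_n} \le e^{-\sigma H_n} \le n^{-\sigma}$ using $H_n \ge \ln n$. Hence $\Pr(D(G) \ge \sigma H_n) \le c n^{b} n^{-\sigma} = c n^{-(\sigma - b)}$, as claimed.

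**Main obstacle.**

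The delicate point is getting the constants to line up in the last step: I need the threshold condition on $\sigma$ to be exactly strong enough that $(e k^2/\sigma)^{\sigma H_n}$ beats $n^{-\sigma}$ after the path-count factor $k^\ell$ is absorbed. The factor $k^2$ (rather than $k$) inside arises because each path-step contributes $k$ both to the probability ($k/j_t$) and to the count (choices of predecessor); tracking this honestly and verifying that $\sigma \ge ke^2$ suffices — i.e., that $e k^2/\sigma \le 1/e$ is not quite what one gets, so one must instead keep one copy of $k$ in the exponent base and check $(e k^2/\sigma)^{\sigma H_n} \le e^{-\sigma H_n}$ reduces to $\sigma \ge e^2 k^2$, or re-derive with a tighter count — is the one computation I would do carefully rather than wave at. Everything else is routine backwards analysis and Stirling.
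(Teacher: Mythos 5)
Your overall strategy---backwards analysis, a union bound over the $cn^b$ tails and the $k^{\ell}$ choices of predecessor along a path, and a Poisson/Chernoff-style tail on the number of positions the path visits---is exactly the paper's. The paper organizes the probability calculation as a backwards walk: after arbitrarily fixing which of the $\le k$ predecessors to follow at each link, the event ``the currently tracked element sits at position $i$'' has probability $1/i$ and these events are independent across $i$, so the path length is a sum of independent Bernoullis with mean $H_n$ and Chernoff gives $(e/\sigma)^{\sigma H_n}$; multiplying by $cn^b k^{\sigma H_n}$ yields base $ke/\sigma \le 1/e$ under $\sigma \ge ke^2$. Your sum over increasing position sequences, bounded by $(H_n)^{\ell-1}/(\ell-1)!$, is the same tail estimate in different clothing.

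The genuine problem is the one you flag at the end but do not resolve: you charge a factor of $k$ twice per link, once in the per-position probability ($k/j_t$) and once in the path count ($k$ choices of predecessor), and the resulting base $ek^2/\sigma$ is not $\le 1/e$ under the hypothesis $\sigma \ge ke^2$ (for $k\ge 3$ it is not even $\le 1$), so as written you only prove the theorem for $\sigma \ge k^2e^2$, which is not the stated claim. The fix is the ``tighter count'' you allude to, and it is not optional: the union bound over the $k^{\ell}$ predecessor sequences already fixes \emph{which} element must occupy position $j_t$, so the correct per-link probability for a fixed candidate path is $1/j_t$, not $k/j_t$ (the bound $k/j_t$ is for the event that \emph{some} predecessor lands there, which is the alternative accounting in which you must \emph{not} also multiply by $k^{\ell}$). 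With $1/j_t$ per link the base becomes $ek/\sigma \le 1/e$, and the rest of your computation, including $H_n \ge \ln n$, goes through to give $cn^{-(\sigma-b)}$. Until that substitution is actually made, the proof does not establish the theorem at the stated threshold.
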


\begin{proof}
  We use backwards
  analysis by considering removing elements one-by-one from the last iteration.
  We analyze a specific dependence path to a tail and then take a union
  bound.

  Consider one of the possible $cn^b$ tails.  It corresponds to a
  single element $e$.  Starting at the end, the probability of the
  event ``element $e$ is at iteration $i = n$'' is $1/n$ since all
  permutations are equally likely.  If $e$ is at $i$, then $i$ is on a
  dependence path for that tail.  We then arbitrarily choose one of the
  up to $k$ remaining elements that $e$ depends on (or \substep{}s of
  an element for the nested case).  Lets now call it $e$---i.e., the
  element we are looking for is always called $e$.  Now we move back
  to $i = n - 1$, and the probability $e$ is at $i = n -1$ is again
  $1/i$.  This is true whether we found our original $e$ at $n$ or
  not---in both cases we are looking for a single element out of $i$
  possible elements.  Repeating this process until $i=1$, the
  probability that element $e$ is at iteration $i$ is always at most
  $1/i$ for all $i$.  Each time $e$ is at $i$ we extend the dependence
  path by $1$ and make another arbitrary choice among $k$
  predecessors, updating $e$ with our choice.

  Let $l$ be a random variable corresponding to the total number of
  dependences on the path we are considering.  We therefore have that
  $E[l] \leq \sum_{i=1}^n \frac{1}{i} = H_n$.  Furthermore each event
  ($e$ at $i$) is independent, giving the Chernoff bound:

    \[ P[l > \sigma E[l]] < \left(\frac{e^{\sigma-1}}{\sigma^{\sigma}}\right)^{E[l]}
  < \left(\frac{e}{\sigma}\right)^{\sigma E[l]} . \]

  We now take a union bound over the $cn^b$ possible tails and the at
  most $k^l$ possible choices we make for a predecessor for a
  dependence path of length $l$.  For $\sigma \geq k e^2$, we have:
  \begin{eqnarray*}
    P[D(g) > \sigma H_n]
    & \leq & c n^b k^{l} \cdot P[l > \sigma H_n] \\
    & < & c n^b k^{\sigma H_n} \left(\frac{e}{\sigma}\right)^{\sigma H_n} \\
    & = & c n^b \left(\frac{k e}{\sigma}\right)^{\sigma H_n} \\
    & \leq & c n^b \left(\frac{1}{e}\right)^{(\ln n) \sigma} \\
    & = & c n^{-(\sigma -b)}~.
  \end{eqnarray*}
\end{proof}

The Type 1 algorithms that we describe can be parallelized by running
a sequence of rounds.  Each round checks all remaining \step{}s to see
if their dependences have been satisfied and runs the \step{}s if so.
They  can be implemented in two ways: one
completely online, only seeing a new element at the start of each
\step{}, and the other offline, keeping track of all elements from the
beginning.  In the first case, a structure based on the history of all
updates can be built during the algorithm that allows us to
efficiently locate the ``position'' of a new element
(e.g.,~\cite{GKS92}), and in the second case the position of each
uninserted element is kept up-to-date on every \step{}
(e.g.,~\cite{CS89}).  The bounds on work are typically the same in
either case.  Our incremental sort uses an online style algorithm,
and the Delaunay triangulation uses an offline one.

\subsection{Type 2 Algorithms}
\label{sec:incalg2}

Type 2 incremental algorithms have a
special structure.  The \idg{} for these
algorithms is formed as follows: each \step{} $j$ is either a \emph{special
\step{}} or a \emph{regular \step{}} (depending on insertion order and the
particular element).  Each special \step{} $j$ has dependence arcs to all
\step{}s $i < j$, and each regular \step{} has one dependences arc to the
closest earlier special \step{}.  The first \step{} is special.  Furthermore
the probability of being a special \step{} is upper bounded by $c/j$ for
some constant $c$ and independent of the choices $j+1$ to $n$.  For
Type 2 algorithms, when a special \step{} $i$ is processed, it will check
all previous \step{}s, requiring $O(i)$ work and depth denoted as $d(i)$, and when a
non-special \step{} is processed it does $O(1)$ work.
% It can be shown
% that in expectation each \step{} takes $O(1)$ work, so this means that
% sequential implementations of Type 2 algorithms take $O(n)$ work in
% expectation.

\begin{theorem}\label{thm:type2}
  A Type 2 incremental algorithm has an \idd{} of
  $O(\log n)$ \whp{}, and can be implemented to run in $O(n)$ expected
  work and $O(d(n)\log n)$ depth \whp{}, where $d(n)$ is the depth of
  processing a special \step{}.
\end{theorem}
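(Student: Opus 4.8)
The plan is to analyze the iteration-dependence DAG of a Type 2 algorithm directly, exploiting its very restricted structure: every edge is either from a special \step{} to all earlier \step{}s, or from a regular \step{} to its nearest preceding special \step{}. First I would observe that any directed path in this DAG alternates between at most one regular \step{} (which can only appear as the final vertex of the path, since regular \step{}s have no outgoing edges to later \step{}s—their only outgoing edge goes \emph{backwards} to the closest earlier special \step{}, hence the arc orientation means a regular \step{} can only be a source in a path of length $\le 1$). Wait—more carefully, dependence arcs point from the later \step{} to the earlier \step{} it depends on, so a path in the DAG is a strictly decreasing sequence of indices. A regular \step{} $j$ has exactly one out-arc, to the closest earlier special \step{}; so following the path from $j$ we immediately land on a special \step{}, and thereafter every step of the path jumps from one special \step{} to some earlier special \step{}. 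Therefore the \idd{} is at most $1$ plus the number of special \step{}s on the longest chain of special \step{}s, and since special \step{}s are totally ordered by the all-to-earlier arcs, the relevant quantity is simply the total number of special \step{}s.

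So the first key step is: bound the number of special \step{}s. Since $\Pr[\text{\step{} } j \text{ is special}] \le c/j$ independently of later choices, the number of special \step{}s $S$ satisfies $E[S] \le \sum_{j=1}^n c/j = c H_n = O(\log n)$, and by a Chernoff bound (the indicator variables are independent in the required direction, or one can dominate by independent ones) $S = O(\log n)$ \whp{}. This gives $D(G) \le S + 1 = O(\log n)$ \whp{}, the first claim.

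The second key step is the work bound. A special \step{} costs $O(i)$ work; a regular \step{} costs $O(1)$. The expected total work is $\sum_{j=1}^n \big(\Pr[j \text{ special}] \cdot O(j) + O(1)\big) \le \sum_{j=1}^n \big(\frac{c}{j} \cdot O(j) + O(1)\big) = \sum_{j=1}^n O(1) = O(n)$. This is the reason the bound is only in expectation rather than \whp{}: a single special \step{} near index $n$ costs $\Theta(n)$, and while such a \step{} is unlikely, conditioning to get a \whp{} statement would require a more delicate argument about \emph{which} \step{}s are special, which the theorem sidesteps.

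The third step is the parallel implementation and its depth. The algorithm proceeds in rounds; in each round every not-yet-completed \step{} checks whether its predecessors in the DAG have completed, and runs if so. Because the dependence structure is a chain of special \step{}s with regular \step{}s hanging off, after the prefix of special \step{}s up to some point has been processed, all regular \step{}s in that prefix become immediately available and can be done in parallel; thus the number of rounds is proportional to the \idd{}, $O(\log n)$ \whp{}. Each round must (a) process the special \step{}s that became ready—each taking depth $d(n)$ in the worst case—and (b) process the freed regular \step{}s in $O(1)$ depth, plus the overhead of checking readiness and compacting the remaining \step{}s, which is $O(\log n)$ or less per round with the processor-allocation costs folded into the depth as stated in the Preliminaries. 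Multiplying $O(\log n)$ rounds by $O(d(n))$ depth per round yields the claimed $O(d(n)\log n)$ depth \whp{}.

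I expect the main obstacle to be making precise the claim that the number of parallel rounds is $O(\idd{})$ rather than merely $O(n)$: one must argue that a regular \step{} never has to wait beyond the round in which its unique special predecessor completes, and that special \step{}s, once their (single relevant) special predecessor is done, can also proceed—i.e., that the "check all previous \step{}s" done by a special \step{} does not create additional waiting. This follows because the all-to-earlier arcs of a special \step{} $j$ point to \step{}s $i < j$ that are, along any maximal chain, already subsumed by the immediately preceding special \step{}; but spelling out that the longest path, not the in-degree, governs round count—i.e., invoking the standard fact that a greedy level-by-level schedule on a DAG finishes in a number of rounds equal to the DAG's depth—is the crux, together with verifying that per-round bookkeeping (readiness tests, compaction, processor allocation) fits within the stated depth.
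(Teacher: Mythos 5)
Your bound on the iteration dependence depth is essentially the paper's: bound the number of special \step{}s by $O(\log n)$ \whp{} via the $c/j$ probabilities and a Chernoff bound, then observe that any path in the DAG is dominated by its special \step{}s. (One small slip: a special \step{} has arcs to \emph{all} earlier \step{}s, including regular ones, so a path can alternate special--regular--special rather than consisting purely of special \step{}s after the first hop; this only changes the constant, giving depth at most twice the number of special \step{}s plus one.) Your expected-work calculation for the \step{}s themselves, $\sum_j (c/j)\cdot O(j) + O(1) = O(n)$, is also the right accounting of the algorithmic work.

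The genuine gap is in the parallel implementation. Your scheduler has ``every not-yet-completed \step{}'' check its readiness in every round. With $\Theta(n)$ outstanding \step{}s per round and $\Theta(\log n)$ rounds, the readiness checks alone cost $\Theta(n\log n)$ work, so this scheme does \emph{not} achieve the claimed $O(n)$ work bound; your work calculation counts only the cost of executing \step{}s, not the cost of repeatedly polling them. This is exactly the obstacle the paper calls out (``we cannot afford to keep all unfinished \step{}s active on each round''), and its proof resolves it with a prefix-doubling schedule: round $i$ activates only the \step{}s in $[2^{i-2}, 2^{i-1})$, and within that prefix repeatedly finds the earliest special \step{} (a minimum computation in $O(2^i)$ work and $O(1)$ depth on the arbitrary CRCW PRAM), runs the regular \step{}s before it, then runs it. The checking cost is then $O(2^i)$ per prefix, summing to $O(n)$, and the expected number of special \step{}s inside prefix $i$ is $\sum_{j=2^{i-2}}^{2^{i-1}-1} c/j = O(1)$, which keeps the per-prefix work at $O(2^i)$ in expectation. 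The depth accounting also changes slightly: the number of sub-rounds is the number of special \step{}s plus $\log n$ (one sub-round minimum per prefix), each of depth $O(d(n))$, giving $O(d(n)\log n)$ \whp{}. A secondary point your sketch glosses over is that the DAG is not known in advance---whether \step{} $j$ is special can only be tested against the state after all earlier special \step{}s have run---so ``checking predecessors'' must be re-done after each special \step{} executes, which is precisely what the sub-round structure handles.
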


\begin{proof}
Since the probability of a special \step{} is bounded by $c/j$
independently of future \step{}s, the expected number of special
\step{}s is $\sum_{j=1}^nc/j = O(\log n)$, and using a Chernoff bound,
the number of special \step{}s is $O(\log n)$ \whp{}.  By construction
there cannot be more than two consecutive regular \step{}s in a path
of the \idg{}, so the \idd{} is at most twice the
number of special \step{}s and hence $O(\log n)$.

We now show how parallel linear-work implementations can be
obtained. A parallel implementation needs to execute the special \step{}s
one-by-one, and for each special \step{} it can do its computation in
parallel. For the non-special \step{}s whose closest earlier special \step{}
has been executed, their computation can all be done in parallel.
%% A $j$-dependence in a random
%% incremental algorithm is one in which $\hp_{ij} = f(j)$ (i.e., depends
%% on the target endpoint of the dependence arc $i\rightarrow j$).  The
%% algorithms that fall into this class in this paper have $\hp_{ij} =
%% O(1/j)$.  We note that Lemma~\ref{lem:pathbound} and
%% Corollary~\ref{thm:lowdepth} are also valid for $j$-dependences
%% (replacing $p_{ij} \leq \hp_i$ with $p_{ij} \leq \hp_j$).  We
%% therefore get the same bounds on dependence depth.  However, the Type
%% 2 algorithms in this paper have the additional advantage that the
%% expected number of dependence arcs is $O(n)$ instead of $O(n \log n)$.
%% Correspondingly the work of the sequential algorithms of this type
%% that we consider is $O(n)$.
To maintain work-efficiency, we cannot afford to keep all unfinished
\step{}s active on each round.  Instead, we start with a constant number
of the earliest \step{}s on the first round and on each round
geometrically increase the number of \step{}s processed, similar to the
prefix methods described in earlier work on parallelizing iterative
algorithms~\cite{BFS12}.

\newcommand{\deref}[1]{^*\!#1}
\begin{algorithm}[t]
\caption{Type 2 Algorithm}\label{alg:type2}
\SetKwFor{ParForEach}{parallel foreach}{do}{endfch}
\KwIn{\Step{}s $[0,\ldots,n)$.}
\medskip
run special \step{} $0$\\
$j \gets 1$\\
\For {\upshape $i \leftarrow 2$ to $\log_2 n$} {
  \While {$j < 2^{i-1}$} {
    \ParForEach {$k \in [j, \ldots, 2^{i-1})$} {
        $F[k]\gets $ check if \step{} $k$ is special}
    $l \gets$ minimum true index in $F$, or $2^{i-1}$ if none\\
    \ParForEach {$k \in [j, \ldots, l)$} {
        run regular \step{} $k$}
    \If{ $l < 2^{i-1}$}{run special \step{} $l$}
    $j \gets l$}}
\end{algorithm}

Pseudocode is given in Algorithm~\ref{alg:type2}.  Without loss of
generality, assume $n=2^k$ for some integer $k$.  We refer to the
outer \textbf{for} loop as rounds, and the inner \textbf{while} loop
as sub-rounds.  Each round $i$ processes \step{}s
$[2^{i-2}, \ldots, 2^{i-1}]$ which we refer to as a \emph{prefix}.   The
variable $j$ at the start of each sub-round indicates that all \step{}s
before $j$ are done, and all \step{}s at or after are not.  Each sub-round
finds the first unfinished special \step{} $l$ within the round, if any.
It then runs all regular \step{}s up to $l$ (all their dependences are
satisfied).  Finally, if a special \step{} was found, that special \step{} is
run (all its dependences are satisfied).  Finding the first unfinished
special \step{} requires a minimum, which can be computed in
$O(2^i)$ work and $O(1)$ depth \whp{} on an arbitrary CRCW
PRAM~\cite{Vishkin10}.    Running all regular \step{}s also requires
$O(2^i)$ work and $O(1)$ depth, and running the special \step{} requires
$O(2^i)$ work and $O(d(n))$ depth.
The number of sub-rounds within a round is one more than
the number of special \step{}s in the prefix, which for any prefix $k$ is
bounded by $\sum_{i=2^{k-2}}^{2^{k-1}-1}c/i = O(1)$ in expectation.
Therefore, the work per round is $O(2^{i-1})$ in expectation, and
summed over all rounds is $O(1)+\sum_{i=2}^{\log n}O(2^{i-1}) = O(n)$
in expectation.     The number of sub-rounds is bounded by the number
of special \step{}s plus $\log n$, and each sub-round has depth $O(d(n))$
so the total depth is
$O(d(n)\log n)$ \whp{}.
\end{proof}

\subsection{Type 3 Algorithms}
\label{sec:incalg3}

In the third type of incremental algorithms it is safe to run \step{}s
in parallel, but this can require extra work.  In these algorithms an
\step{} can ``separate'' future \step{}s.  A simple example, again, is
insertion into a binary search tree, where the first key inserted
separates keys less than it from ones greater than it.  The idea is to
then process the \step{}s in rounds of increasing powers of two, as in
the Type 2 case.  However in this case, every \step{} in a round will
run as if it is at the beginning of the round ignoring conflicts, and
conflicts are resolved after the round.  We apply this approach to two
graph problems: least-elements (LE) lists and strongly connected
components (SCC).

\newcommand{\less}{<}
\newcommand{\sepdep}{separating dependences}

Consider a set of elements $S$.  We assume that each element $x \in S$
defines a total ordering $\less_x$ on all $S$.  This ordering can be
the same for each $x \in S$, or different.  For example, in sorting
the total ordering would be the order of the keys and the same for all
$x \in S$.  For a DAG, the ordering could be a topological sort, and
possibly different for each vertex since topological sorts are not
unique.  For both our applications, LE-lists and SCC, the orderings
can be different for each element.  The distinct orders is the
innovative aspect of our analysis.

\begin{definition}[\sepdep]
  \label{def:sepdep}
An incremental algorithm has \defn{\sepdep} if for all input $S$ (1) it
has total orderings $\less_x, x \in S$, and (2) for any three elements $a,
b, c \in S$, if $a \less_c b \less_c c $ or $c \less_c b \less_c a$,
then $c$ can only depend on $a$ if $a$ is inserted first among the
three.
\end{definition}
In other words, if $b$ separates $a$ from $c$ in the total ordering for $c$, and runs first, it will
separate the dependence between $a$ and $c$ (also if $c$ runs before $a$, of course,
there is no dependence from $a$ to $c$).   Again, using sorting as an
example, if we insert $b$ into a BST first (or use it as a pivot in
quicksort), it will separate $a$ from $c$ and they will never be
compared (each comparison corresponds to a dependence).
Let $d_{(i,j)}$ be the event that there is a dependence from \step{}
$i$ to \step{} $j$, and $p(d_{(i,j)})$ be its probability over all insertion orders.

\begin{lemma}
  \label{lem:sep}
In a randomized incremental algorithm that has \sepdep{},
we have $$p(d_{(i,j)}\,|\,d_{(i+1,j)},\ldots,d_{(j-1,j)} ) \leq 2/i$$ for $1\le i<j\le n$.
%Furthermore for a fixed $j$, the upper bound is independent
%across the $i$.
\end{lemma}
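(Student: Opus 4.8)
The plan is a backwards-analysis argument in which the only randomness that matters is which element sits at position $i$. Write $e_1,\dots,e_n$ for the elements in insertion order, set $c:=e_j$, and let $B=\{e_1,\dots,e_j\}$ be the set of the first $j$ inserted elements. I would first condition on $B$ together with the entire ordering of positions $i+1,\dots,j$; this pins down $c$ and $e_{i+1},\dots,e_{j-1}$ and determines the \emph{set} $A$ of the first $i$ elements, while the internal order of $A$ is still uniform over its $i!$ arrangements, so each element of $A$ occupies position $i$ with probability exactly $1/i$. Relative to this conditioning I claim (a) that $d_{(i,j)}$ forces $e_i$ to be one of at most two distinguished elements of $A$, and (b) that conditioning further on $\mathcal{E}:=d_{(i+1,j)}\wedge\cdots\wedge d_{(j-1,j)}$ leaves $e_i$ uniform on $A$. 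Granting both, $\Pr(d_{(i,j)}\mid\mathcal{E})\le 2/i$ under the conditioning, and averaging over the choice of $B$ and of the ordering of positions $i+1,\dots,j$ yields the lemma.

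For (a), suppose $e_i<_c c$ (the other case is symmetric). If $A$ contained an element $b$ with $e_i<_c b<_c c$, then $b$ occupies a position smaller than $i$, hence $b$ is inserted before both $e_i$ and $c$ while $b$ separates $e_i$ from $c$ in the ordering $<_c$; by Definition~\ref{def:sepdep} this forbids $c$ from depending on $e_i$, contradicting $d_{(i,j)}$. So $d_{(i,j)}$ requires $e_i$ to be the $<_c$-maximum of $\{x\in A: x<_c c\}$ (or, symmetrically, the $<_c$-minimum of $\{x\in A: x>_c c\}$); these amount to at most two elements of $A$, and which two they are does not depend on which element of $A$ happens to be $e_i$.

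For (b), the crucial observation is that each event $d_{(k,j)}$ with $i<k<j$ --- or at least the separation condition that Definition~\ref{def:sepdep} attaches to it --- is insensitive to the internal order of $A$: however $A$ is arranged, all of $A$ precedes step $k$, so whether step $k$ is separated from $c$ by some earlier insertion is a function only of the set $A$ and of the already-fixed ordering of positions $i+1,\dots,j-1$. Thus $\mathcal{E}$ is an event on data that excludes the order within $A$, so conditioning on it cannot bias which element of $A$ lands at position $i$; $e_i$ stays uniform on $A$, and by (a) at most two of the $i$ equiprobable outcomes realize $d_{(i,j)}$, giving $2/i$. The step I expect to be the delicate one is exactly (b): making rigorous that conditioning on the \emph{realized} dependences $d_{(i+1,j)},\dots,d_{(j-1,j)}$, rather than only on the separation pattern they entail, still preserves the uniformity of $e_i$ on $A$ --- which is where one leans on the fact that for any step strictly between $i$ and $j$ its prefix of earlier insertions already contains all of $A$, so nothing about the order within $A$ is visible from whether that step is separated from $c$.
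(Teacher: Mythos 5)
Your proposal is correct and follows essentially the same backwards-analysis argument as the paper: a dependence from step $i$ to $j$ forces the element at position $i$ to be one of the (at most) two $<_j$-closest elements to $e_j$ among the first $i$, and that element is uniform over the first $i$ regardless of the conditioning. Your part (b) just spells out, more carefully than the paper's one-line "this is independent of choices after $i$," why conditioning on $d_{(i+1,j)},\ldots,d_{(j-1,j)}$ does not bias which element of $A$ lands at position $i$ (namely, those events depend only on the set of the first $i$ elements and the order of positions after $i$, which is indeed how the paper's applications behave).
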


\begin{proof}
  Consider the total ordering $\less_j$.  Among the elements inserted in the first $i$
  \step{}s, at most two of them are the closest (by $\less_j$) to
  the element inserted at \step{} $j$ (at most one on each side).
  There will be a dependence from \step{} $i$ to $j$ only if the
  element selected on \step{} $i$ is one of these two---otherwise
  \step{}s before $i$ would have separated $i$ from $j$.   Since all
  of the first $i$ elements are equally likely to be selected on \step{} $i$, and
  this is independent of choices after $i$, the
  conditional probability is at most $2/i$.
\end{proof}

\begin{corollary}
The number of dependences in a randomized incremental algorithm with
\sepdep{} is $O(n \log n)$ in expectation.\footnote{\small Also true \whp{}.}
%\yan{Is this true?}}
\end{corollary}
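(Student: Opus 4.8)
The plan is to sum the dependence probabilities $p(d_{(i,j)})$ over all pairs $1 \le i < j \le n$ using linearity of expectation, and bound each sum over $i$ for fixed $j$ by a harmonic sum. First I would fix the destination step $j$ and consider the $j-1$ potential dependence edges $d_{(1,j)}, d_{(2,j)}, \ldots, d_{(j-1,j)}$ coming into it. Lemma~\ref{lem:sep} gives a bound on the \emph{conditional} probability $p(d_{(i,j)} \mid d_{(i+1,j)}, \ldots, d_{(j-1,j)}) \le 2/i$, which conditions on the presence of all later incoming edges. The key observation is that this particular conditioning structure is exactly what lets us telescope: writing the joint probability of any configuration of incoming edges as a product of conditionals processed from $i = j-1$ down to $i = 1$, each factor $p(d_{(i,j)} = 1 \mid \text{edges } i+1, \ldots, j-1)$ is at most $2/i$.

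The main step is therefore to turn the conditional bound into an unconditional expectation bound on the in-degree of $j$. The clean way is to note that for any events $A_1, \ldots, A_{m}$ (here $A_i = d_{(i,j)}$, indexed so $A_m$ is the "latest"), if $\Pr(A_i \mid A_{i+1}, \ldots, A_m) \le q_i$ for \emph{every} value of the conditioning events (both the all-true extreme and all intermediate ones — one must check Lemma~\ref{lem:sep} as stated gives exactly the bound conditioned on all later edges being present, and a routine argument or a restatement extends it to arbitrary conditioning since conditioning on fewer "separating" events can only help), then $\E\!\left[\sum_i \mathbf{1}[A_i]\right] \le \sum_i q_i$. Applying this with $q_i = 2/i$ yields $\E[\deg^{-}(j)] \le \sum_{i=1}^{j-1} 2/i \le 2 H_n$. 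Summing over $j = 1, \ldots, n$ gives total expected number of dependences at most $2 n H_n = O(n \log n)$.

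The part that needs the most care is the passage from "conditioned on all later edges present" to a usable bound on the expected in-degree, since Lemma~\ref{lem:sep} is stated only for that one conditioning event. I expect this to be the main (mild) obstacle: one either argues directly that $\E[\deg^{-}(j)] = \sum_{i} \Pr(d_{(i,j)})$ and bounds each unconditional $\Pr(d_{(i,j)})$ — but the lemma as stated bounds a conditional probability, not the unconditional one — or, more robustly, one re-examines the proof of Lemma~\ref{lem:sep}: the argument there ("among the first $i$ inserted elements, at most two are closest to element $j$ in $\less_j$, and a dependence from $i$ to $j$ requires the step-$i$ element to be one of those two") actually shows $\Pr(d_{(i,j)} \mid \text{any event measurable w.r.t. steps after } i) \le 2/i$, because the step-$i$ element is uniform among the first $i$ and independent of later choices. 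With that stronger reading, the telescoping is immediate and the corollary follows. For the footnote claim (also \whp), one additionally observes that the indicator variables $\mathbf{1}[d_{(i,j)}]$ along the "backwards" processing order satisfy the hypotheses of a Chernoff-type bound for dependent variables with bounded conditional expectations (as in the proof of Theorem~\ref{thm:type1}), so the sum concentrates around its $O(n\log n)$ mean.
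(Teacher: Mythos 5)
Your proposal is correct and follows essentially the same route as the paper, which simply bounds $\sum_{j=2}^n\sum_{i=1}^{j-1}p(d_{(i,j)})$ by $2n\ln n$ via linearity of expectation and the $2/i$ bound. Your extra care about passing from the conditional bound of Lemma~\ref{lem:sep} to the unconditional probabilities is a legitimate point that the paper glosses over, and your resolution (the proof of the lemma only uses that the step-$i$ element is uniform among the first $i$ and independent of later choices, so the bound holds under any conditioning on later events) is exactly right.
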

This comes simply from the sum $\sum_{j=2}^n\sum_{i=1}^{j-1} p(d_{(i,j)})$ which
is upper bounded by $2 n \ln n$.
%and the independence of the $\pij$ across $i$
%give the high probability bounds (Section 5 in~\cite{Seidel93}).
This leads, for example, to a proof that quicksort, or randomized
insertion into a binary search tree, does $O(n \log n)$ comparisons in
expectation.  This is not the standard proof based on $p_{ij} =
2/(j-i+1)$ being the probability that the $i$'th and $j$'th smallest
elements are compared~\cite{CLRS}.  Here the $p_{ij}$ represent
the probability that the $i$'th and $j$'th elements in the random
order are compared.

In this paper, we introduce graph algorithms that have \sepdep{} with
respect to the processing order of the vertices, and there is a dependence from
vertex $i$ to vertex $j$ if a search from $i$ (e.g., shortest path or
reachability) visits $j$.

\begin{algorithm}[t]
\caption{Type 3 Algorithm}\label{alg:type3}
\SetKwFor{ParForEach}{parallel foreach}{do}{endfch}
\KwIn{\Step{}s $[0,\ldots,n)$.}
\medskip
run \step{} $0$\\
\For {$i \leftarrow 1$~~\textbf{\emph{to}}~$\log_2 n$} {
  \ParForEach {$k \in [2^{i-1}, \ldots, 2^i)$} {
    Run \step{} $k$ as if at \step{} $2^{i-1}$, i.e., using the final state from
    the previous round
  }
  \ParForEach {$k \in [2^{i-1}, \ldots, 2^i)$} {
    Combine state such that earlier $k$ have higher priority \label{line:combine}\\
    (Final state should be the same as if run sequentially up to $2^i -1$)
    %\julian{seems like this should be a comment}
    }
    }
\end{algorithm}

To allow for parallelism, we permit \step{}s to run concurrently in
rounds, as shown in Algorithm~\ref{alg:type3}.  This means that we
might not separate \step{}s that were separated in the sequential
order.  For example, if $a$ separates $b$ from $c$ ($a < b < c$) in
the sequential order, but we run $a$ and $b$ in the same round, then
$c$ might depend on $b$ in the parallel order.  We therefore have to
consider $b$ as running at the start of the round (position $2^{i-1}$)
in determining the probability $p(d_{b,c})$.  This will cause
additional work, but as argued in the theorem below, the work is only increased by a
constant factor.  The second parallel loop is needed to combine
results from the \step{}s that are run in parallel.  The technique
here depends on the algorithm, but is simple for the algorithms we
consider for LE-Lists and SCC.

Consider applying the approach to insertion into a binary search tree.
On each round $i$, $2^{i-1}$ keys are already inserted into a BST and
in parallel we try to insert the next $2^{i-1}$ keys.  In the first
loop all new keys will search the tree for where they belong.  Many
will fall into their own leaf and be happy, but there will be some
conflicts in which multiple keys fall into the same leaf.  The second
loop would resolve these conflicts.   This is a different parallel algorithm
than the Type 1 algorithm described in Section~\ref{sec:sorting}.

We say that iteration $a$ has a left (right) dependence to a later
iteration $b$ if $b$ depends on $a$ and $a <_b b$ ($b <_b a$).
This definition is used to show the total number of dependences
of a specific iteration as follows.

\begin{lemma}
  \label{lem:type3}
  When applying Algorithm~\ref{alg:type3} to an incremental algorithm
  with \sepdep{}, let $p_{ij}(l), j\ge 2^i$ be the probability that
  $l$ iterations in round $i$ have a left dependence to iteration $j$.
  Then for all $i$ and $j$, we have $p_{ij}(l) \leq 2^{-l}$.
\end{lemma}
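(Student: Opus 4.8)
The plan is to fix $j$ together with the element $v$ inserted at iteration $j$, condition on both (so none of the first $2^i$ iterations gets position $j$), and argue entirely with the order $\less_v$. The first step is to establish the analogue of Lemma~\ref{lem:sep} for Algorithm~\ref{alg:type3}: a round-$i$ iteration $a$ can have a \emph{left} dependence to $j$ only if the element inserted at $a$ lies $\less_v$-below $v$ and is $\less_v$-closer to $v$ than every element inserted in iterations $[0,2^{i-1})$ that also lies $\less_v$-below $v$. Indeed, if some element $w$ inserted in an earlier round lay strictly $\less_v$-between them, then $w$ is processed before round $i$, hence before $a$, so by Definition~\ref{def:sepdep} iteration $j$ cannot depend on $a$; and any element that precedes $j$ but is not in an earlier round than $i$ is processed no earlier than $a$ --- within round $i$ the iterations act as though run simultaneously at the round's start --- so by Definition~\ref{def:sepdep} it does not preclude the dependence. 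We only need this as a necessary condition, so it is harmless that the parallel schedule may in fact avoid still more dependences.

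Next I would recast the resulting count as a ball-placing experiment. List the elements lying $\less_v$-below $v$, in decreasing $\less_v$-order, as $y_1, y_2, \dots$ (so $y_1$ is the immediate $\less_v$-predecessor of $v$), and classify each $y_r$ by the iteration at which it is inserted: \emph{early} if that iteration lies in $[0,2^{i-1})$, \emph{round} if it lies in $[2^{i-1},2^i)$, and \emph{late} otherwise. By the previous step, the number of round-$i$ iterations with a left dependence to $j$ is at most the number $L$ of indices $r$ for which $y_r$ is \emph{round} while none of $y_1,\dots,y_{r-1}$ is \emph{early}; equivalently, $L$ counts the \emph{round} elements occurring before the first \emph{early} element in the sequence $y_1,y_2,\dots$. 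Deleting the \emph{late} elements, $L\ge l$ holds exactly when the first $l$ surviving elements are all \emph{round}, so $\Pr[L\ge l]$ equals the probability of that event and $p_{ij}(l)\le\Pr[L\ge l]$.

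Finally I would bound this event by revealing the insertion iterations of $y_1,y_2,\dots$ one at a time and chaining conditional probabilities. Suppose the first $m-1$ non-\emph{late} elements have all turned out to be \emph{round}. Then the only iterations in $[0,2^i)$ occupied so far are those $m-1$ elements, all lying in $[2^{i-1},2^i)$ --- $v$ sits at iteration $j\ge 2^i$, and every \emph{late} element sits at an iteration $\ge 2^i$ --- so $2^{i-1}$ positions in $[0,2^{i-1})$ and $2^{i-1}-(m-1)$ positions in $[2^{i-1},2^i)$ remain free. Placing the remaining $y_r$'s, each into a uniformly random free iteration, the first one to land in $[0,2^i)$ is uniform over the $2^i-(m-1)$ free positions there (the placements that fall outside $[0,2^i)$ keep those positions exchangeable), hence is \emph{round} with conditional probability $\frac{2^{i-1}-(m-1)}{2^i-(m-1)}\le\frac12$. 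Multiplying over the first $l$ non-\emph{late} elements gives $\Pr[L\ge l]\le 2^{-l}$, and since this bound does not depend on the choice of $v$, we conclude $p_{ij}(l)\le 2^{-l}$.

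The step I expect to be the main obstacle is the characterization in the first paragraph: arguing cleanly which iterations can \emph{separate} a would-be left dependence under the round schedule of Algorithm~\ref{alg:type3} --- using that the iterations of a round behave as if executed simultaneously at its start and that the combine step restores the true sequential state between rounds --- so that only the first $2^{i-1}$ iterations can matter and the resulting bound is independent of $j$'s own round. A lesser point of care is the exchangeability claim that the first placement landing in $[0,2^i)$ is uniform over the free positions there.
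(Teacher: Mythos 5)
Your proof is correct and follows essentially the same route as the paper: both arguments observe that $l$ left dependences from round $i$ require the $l$ closest $\less_j$-predecessors among the first $2^i$ insertions to all land in $[2^{i-1},2^i)$, and both bound this by chaining conditional probabilities $\frac{2^{i-1}-(m-1)}{2^i-(m-1)}\le\frac12$. Your write-up just makes the separation argument and the exchangeability step more explicit than the paper does.
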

\begin{proof}
  Clearly $p_{ij}(0) \leq 2^{-0} = 1$.
    The probability that among iterations $[0,\ldots,2^i)$,
    the closest iteration to $j$ based on $<_j$ appears among $[2^{i-1},\ldots,2^i)$ is $1/2$ (since elements are in random order).  Therefore $p_{ij}(1) \le 1/2$.  Now given
  that the first is closest, the probability that the second is
  closest out of the remaining iterations in $[2^{i-1},\ldots,2^i)$ is $(2^{i-1} - 1)/(2^i - 1) < 1/2$.  Hence,  the probability for $l=2$ is less than $1/4$.  This repeats so
  $p_{ij}(l) < 2^{-l}$ for $l > 1$, giving our bound.
\end{proof}
We can make the symmetric argument about dependences on the right.
Importantly the expected number of dependences from a round to a later element
is constant, and the probability that the number of dependences is large is low.

\begin{theorem}
  \label{thm:type3}
  A randomized incremental algorithm with \sepdep{} can run in $O(\log
  n)$ parallel rounds over the \step{}s and every \step{} will
  have $O(\log n)$ incoming dependences \whp{} (for a total of
  $O(n \log n)$ \whp{}).
\end{theorem}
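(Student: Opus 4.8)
The plan is to separate the two claims. The bound of $O(\log n)$ parallel rounds is immediate from the structure of Algorithm~\ref{alg:type3}: the outer loop runs exactly $\log_2 n$ times (plus the initial \step{} $0$), and each iteration consists of a constant number of parallel loops. So all the content is in the per-\step{} dependence bound, and the ``total $O(n\log n)$'' statement then follows by a union bound over the $n$ \step{}s together with the observation that every dependence edge is incoming to exactly one \step{}.

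So fix a \step{} $j$, and let $i_j$ be the round in which $j$ is processed, so $2^{i_j-1}\le j<2^{i_j}$ and $i_j\le\log_2 n$. I would first argue that every incoming dependence of $j$ can be charged to one of the rounds $0,1,\dots,i_j$: during the first parallel loop of round $i_j$, \step{} $j$ is run against the consolidated state of round $i_j-1$, i.e.\ against \step{}s $[0,2^{i_j-1})$, so any dependence coming from that prefix is attributed to the round $\le i_j-1$ in which its source \step{} was processed; the only remaining dependences are those created in round $i_j$ itself by the conflict-resolution of the second parallel loop. For a fixed earlier round $i'$, Lemma~\ref{lem:type3} says the probability that $l$ \step{}s of round $i'$ have a left dependence on $j$ is at most $2^{-l}$, and the symmetric statement bounds right dependences the same way, so the number $X_{i'}$ of dependences from round $i'$ to $j$ satisfies $\Pr[X_{i'}\ge l]\le 2\cdot 2^{-l}$. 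The in-round contribution $X_{i_j}$ obeys the same bound by the same closest-element argument: after conflicts are resolved, only the nearest surviving \step{}s on each side in the order $<_j$ can influence $j$. Hence the total incoming count is $X=\sum_{i'=0}^{i_j}X_{i'}$, a sum of at most $\log_2 n+1$ random variables, each with a geometric-type tail.

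To finish I would turn these tails into concentration for $X$. Each $X_{i'}$ has a bounded exponential moment: taking $z=\sqrt2$, $E[z^{X_{i'}}]\le\sum_{l\ge0}z^{l}\cdot 2\cdot 2^{-l}=O(1)$. If the $X_{i'}$ were independent, this gives $E[z^{X}]=n^{O(1)}$, hence $\Pr[X\ge c\log_2 n]\le n^{O(1)}\,z^{-c\log_2 n}=n^{O(1)-c/2}$, which is $n^{-(k+1)}$ once $c$ is a large enough multiple of $k$; a union bound over the $n$ choices of $j$ then gives that every \step{} has $O(\log n)$ incoming dependences \whp{}, and summing over \step{}s yields $O(n\log n)$ \whp{}.

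The point needing care — and the main obstacle — is that the $X_{i'}$ are all functions of the same random permutation and are not literally independent, so multiplying the exponential moments needs justification. I would address this by upgrading Lemma~\ref{lem:type3} to a conditional statement: its proof only uses that, among the first $2^{i'}$ elements in the random order, each is equally likely to occupy any given early position, independently of the later \step{}s; I would check that this argument still goes through after conditioning on the dependence outcomes of rounds $0,\dots,i'-1$, giving $\Pr[X_{i'}\ge l\mid X_0,\dots,X_{i'-1}]\le 2\cdot 2^{-l}$, after which the conditional exponential moments multiply and the bound on $E[z^X]$ is recovered. Verifying that revealing the earlier rounds' counts does not distort the ``closest element'' probabilities used for round $i'$ is the step I expect to have to be most careful about; an alternative fallback would be a direct union bound over the possible profiles $(l_0,\dots,l_{i_j})$ of per-round counts, but that too ultimately needs a joint (not merely marginal) tail bound, so the conditional form of Lemma~\ref{lem:type3} seems the cleanest route.
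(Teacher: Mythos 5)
Your proposal matches the paper's proof in all essentials: both decompose the incoming dependences of a fixed \step{} $j$ by round, invoke Lemma~\ref{lem:type3} (doubled to account for left and right dependences) to get a geometric tail per round, dispose of the in-round contribution by noting it is dominated by that of earlier rounds, and then apply concentration for a sum of $O(\log n)$ sub-exponential variables followed by a union bound over $j$. The only differences are cosmetic---you compute the exponential moment directly with $z=\sqrt 2$ where the paper cites a ready-made tail bound for sums of independent exponentials---and the independence-across-rounds issue you rightly flag as the delicate step is exactly the point the paper handles (rather tersely) by asserting that, working backwards, each round selects a uniformly random half of the remaining elements, which is the conditional form of Lemma~\ref{lem:type3} you propose to verify.
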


\begin{proof}
  We just consider left dependences, the right ones will just double the count.
  For fixed $j$ the upper bounds on the
  probabilities $p_{ij}$ are independent across the rounds $i$.
  This is because
  working backwards each round picks a random set of $1/2$ the remaining
  elements.
  The round that iteration $j$ belongs to contains less dependences than previous rounds, and the rounds later have no dependences to iteration $j$.
  Therefore, $p_{ij}(l) \leq 2^{-l}$ holds for all rounds even when $j<2^i$.

  For a set of independent random variables $X_i$ with exponential
  distribution $X_i \sim Exp(a)$, the sum $X = \sum X_i$
  satisfies the following tail bounds~\cite{janson2018tail}:
  \[P(X > \sigma \mathbb{E}[X]) \leq \sigma e^{-a\mathbb{E}[X](\sigma -1 - \ln \sigma)}\]
  For $\log_2 n$ rounds, $\mathbb{E}[X] = (\log_2 n) / a$ and $P(X >
  (\sigma/a) \log_2 n) \leq \sigma n^{-(\sigma -1 - \ln \sigma)}$ which
  satisfies the high probability condition.  Since by
  Lemma~\ref{lem:type3} our distributions are sub-exponential, the
  tails are no larger.
\end{proof}
Theorem~\ref{thm:type3} does not explicitly give the work and depth
for an algorithm since it will depend on the costs of running each
\step{}.  These will be given for the particular algorithms in
Section~\ref{sec:graph}.

\newcommand{\incsort}{\textsc{IncrementalSort}}

\section{Comparison Sorting (Type 1)}
\label{sec:sorting}

We first consider how to use our framework for sorting by
incrementally inserting into a binary search tree (BST) with no
rebalancing.  For simplicity we assume no two keys are equal.
%\julian{should we describe the general case with equal keys? I think we just need to randomly (but determinisitically) pick whether to go to the left or to the right to keep the depth bound.} \yan{Can we just break tie using labels?  It is random but deterministic.}
It is well-known that for a random insertion order, inserting into a
BST takes $O(n \log n)$ time (comparisons) in expectation, or even with high
probability.  We apply our Type 1 approach to show that the sequential
incremental algorithm is also efficient in parallel.
Algorithm~\ref{alg:incsort} gives pseudocode that works either
sequentially or in parallel.  An \step{} is one round of the
\textbf{for} loop on Line 2.  For the parallel version, the
\textbf{for} loop should be interpreted as a \textbf{parallel for},
and the assignment on Line~\ref{line:assign} should be considered a
priority-write---i.e., all writes happen synchronously across the $n$
\step{}s, and when there are writes to the same location, the earliest
iteration gets written.  The sequential version does not need the check on
Line~\ref{line:par} since it is always true.

\newcommand{\deref}[1]{^*\!#1}
\begin{algorithm}[t]
\caption{\incsort}\label{alg:incsort}
\KwIn{A sequence $K = \{k_1,\ldots,k_n\}$ of keys.}
\KwOut{A binary search tree over the keys in $K$.}
  \vspace{.3em}
\tcp{\textrm{$\deref{P}$ reads indirectly through the pointer $P$.\\
%$P := N$ writes $N$ into the location pointed to by $P$\\
The check on Line~\ref{line:par} is only needed for the parallel version.}}
  \vspace{.3em}
Root $\gets$ a pointer to a new empty location\\
\For {$i \leftarrow 1$ to $n$} {
  $N\gets$ newNode($k_i$)\\
  $P\gets$ Root\\
  \While {true} {
    \If {$~\deref{P} =$ \emph{null} \label{line:isempty}} {
      write $N$ into the location pointed to by $P$ \label{line:assign}\\
      %$P$ := $N$ \label{line:assign}\\
      \If {$~\deref{P} = N$ \label{line:par}}{break~~~~~~~// write succeeded and iteration $i$ is done}}
    \If {$N$.key $<{} \deref{P}$.key ~} {
        $P\gets$ pointer to $\deref{P}$.left}
    \Else {$P\gets$ pointer to $\deref{P}$.right}
  }
}
\Return {\emph{Root}}
\end{algorithm}

The dependence between \step{}s in the algorithm is in the check if
$\deref{P}$ is empty in Line~\ref{line:isempty}.  This means that \step{}
$j$ depends on $i < j$ if and only if the node for $i$ is on the path
to $j$.    The only important dependence is the last one on the path, since
all the others are subsumed by the last one (i.e. they do not appear in the
transitive reduction of the dependence graph).

%% \begin{lemma}
%% For keys in random order, \incsort{} \step{} $j$ depends on \step{}
%% $i < j$ with probability at most $2/i$, and this upper bound is
%% independent of all choices for $k > i$.
%% \end{lemma}
%% \begin{proof}
%%   The proof follows the standard analysis (e.g.~\cite{Seidel93}).  We
%%   consider the probability for steps $i$ and $i+1$ (and hence an upper
%%   bound for all $j > i$).  Since it was inserted last, node $i$ is a
%%   leaf in the BST when $i+1$ is inserted.  Node $i$ will therefore
%%   only be on the path to $i+1$ if they are neighbors in
%%   \textbf{sorted}$(1, \ldots, i+1)$.  Node $i+1$ has at most two
%%   neighbors, each which is added on step $i$ with probability $1/i$
%%   (independent of all choices of $k > i$), giving $\hp_i = 2/i$.
%% \end{proof}

\begin{lemma}
  Insertion of $n$ keys into a binary search tree in random order has
  \idd{} $O(\log n)$ \whp{}.
\end{lemma}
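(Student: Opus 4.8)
The plan is to recognize randomized insertion into a BST as a Type~1 incremental algorithm and then invoke Theorem~\ref{thm:type1}. Concretely, I would show that the algorithm has $2$-bounded dependences and that its tail count is $n$ — so that $k=2$, $b=1$, $c=1$ in the notation of the theorem — and then read off the $O(\log n)$ bound using $H_n = O(\log n)$.

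First I would pin down the \idg{}. As observed just before the lemma, \step{} $j$ depends on \step{} $i<j$ exactly when the node created for $k_i$ lies on the search path walked when $k_j$ is inserted; since the already-present nodes on $k_j$'s search path are precisely the ancestors of $k_j$, this is the same as saying $k_i$ is an ancestor of $k_j$ in the final BST. Because the search path to $k_j$ is a prefix of the search path to every descendant of $k_j$, the transitive reduction of this DAG has a single in-edge per node (from its parent), so $\idepth{G}$ is exactly the height of the final BST and every dependence path is a root-to-node path of the tree. This framing makes the $2$-boundedness transparent in the next step.

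Next I would verify that the algorithm has $2$-bounded dependences and tail count $n$. If $k_v$ is the key inserted last and $v_p, v_n$ are its sorted predecessor and successor (whichever exist), then walking $k_v$ down the tree we always branch toward $v$, so $k_v$ ends up as a child of whichever of $v_p$ and $v_n$ was inserted later; hence, once both of those are present, the slot occupied by $k_v$ — and therefore the safety of inserting $k_v$ — is fixed, independent of the remaining keys and their order. Since we cannot tell in advance which of the two is the parent, we retain both as candidate predecessors, giving $k=2$. For the tail count, the key inserted last in any order is necessarily a leaf, so every one of the $n$ keys is a possible tail and the tail count is $n$, i.e., $b=1$, $c=1$.

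Finally, with these parameters Theorem~\ref{thm:type1} gives $\Pr(\idepth{G} \geq \sigma H_n) < n^{-(\sigma-1)}$ for every constant $\sigma \geq 2e^2$; taking $\sigma$ to be a suitable constant (or $\sigma = k'+1$ to meet the high-probability requirement at exponent $k'$) and using $H_n = O(\log n)$ yields $\idepth{G} = O(\log n)$ \whp{}. The only point that needs genuine care is the claim that the insertion of the last key is determined once its two sorted-neighbors are in the tree; the rest is bookkeeping feeding the Type~1 theorem.
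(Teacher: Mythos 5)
Your proposal is correct and follows essentially the same route as the paper: establish $2$-bounded dependences via the sorted predecessor/successor of the last-inserted key, note the tail count is $n$, and apply Theorem~\ref{thm:type1} with $k=2$, $b=1$, $c=1$. Your added observation that the transitive reduction of the dependence graph is the BST itself (so the \idd{} is the tree height) is also made in the paper, immediately after the lemma.
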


\begin{proof}
  When inserting an element at \step{} $i$ (removing in backwards
  analysis), there are at most two keys it can directly depend on, the
  previous and the next in sorted order (it is at most since there
  might not be a previous or next key).  This is among the keys from
  \step{}s $1$ to $i-1$.  Therefore there is a $2$-bounded dependence for
  all \step{}s.  Every key can be a tail (a leaf in the final binary
  search tree), so the tail count is $n$.  Using
  Theorem~\ref{thm:type1} we therefore have that the \step{} depth
  is bounded by $\sigma H_n$ for $\sigma > 2 e^2$ with probability at
  most $n^{-\sigma + 1}$.
\end{proof}

We note that since \step{}s only depend on the path to the key, the
transitive reduction of the \step{} dependence graph is simply the
BST itself.  In general, e.g. Delaunay triangulation in the next
section, the dependence structure is not a tree.

%% Using the general approach to parallelizing type 1 algorithms
%% described in Section~\ref{sec:incalg} leads to the following efficient
%% and very simple parallel sorting algorithm.  We assume that the
%% priority writes resolve so that the writes earlier in the random order
%% ``win'' (get written).  The algorithm starts with all \step{}s (keys)
%% active (not inserted yet).  Each step (key) also starts with a node
%% containing its key, but not attached to anything yet.  During the
%% algorithm each \step{} keeps track of a pointer to a leaf $P_i$.  The
%% algorithm proceeds in rounds.  In each round, first, in parallel, each
%% active \step{} writes its node into its pointer $P_i$ (trying to fill
%% the leaf with itself).  Many \step{}s can try to write to the same
%% pointer.  Then, in parallel, each active \step{} checks if it got
%% written to $P_i$ (won on the priority concurrent write).  If a step
%% won, then it is done and becomes inactive (its node is in the tree),
%% otherwise it compares the key at the node pointed to by $P_i$ to its
%% own key, and updates $P_i$ to be a pointer to the left or right child
%% appropriately.  The algorithm now proceeds to the next round, and repeats
%% until there are no active steps left.

\begin{theorem}
  The parallel version of \incsort{} generates the same tree as
  the sequential version, and for a random order of $n$ keys runs in
  $O(n \log n)$ work and $O(\log n)$ depth \whp{} on a priority-write
  CRCW PRAM.
\end{theorem}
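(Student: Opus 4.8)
The statement bundles three claims: correctness (the parallel run produces the sequential tree $T$), the $O(\log n)$ depth, and the $O(n\log n)$ work, all on a priority‑write CRCW PRAM. The plan is to exploit the fact that the parallel model runs the $n$ copies of the inner \textbf{while} loop in lockstep and that each pass of that loop descends exactly one level: thus at \emph{round} (super‑step) $s$, every still‑active iteration $i$ is examining the depth‑$(s-1)$ location on \emph{some} root‑to‑node path. The whole argument is then an induction on $s$, i.e.\ on tree depth.

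For correctness I would prove the invariant: after round $s$, the part of $T$ at depths $<s$ has been built exactly (same keys, same slots), no location at depth $\ge s$ has been written, and each iteration $i$ that is still active is poised at the depth‑$(s-1)$ location of $k_i$'s search path in $T$. The base case $s=1$ is immediate: iteration $1$ wins \texttt{Root} by priority, and every other iteration re‑reads, sees $k_1$, and descends to the correct child. For the inductive step, fix a location $L$ at depth $s-1$ that is on some poised iteration's path. By the invariant, the iterations that write to $L$ during round $s$ are exactly those whose $T$‑search path passes through $L$, equivalently those whose key lies in the subtree of $T$ rooted at $L$; and the crucial structural fact — along any search path in $T$ the insertion indices strictly increase, because every strict ancestor of a node was inserted before it — shows that the smallest‑indexed such writer is precisely the iteration whose own node in $T$ is $L$ (the root of that subtree). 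Hence the priority‑write installs exactly the key that $T$ has at $L$, that iteration's re‑read succeeds and it breaks, while every other writer re‑reads, sees a different node, and descends to the correct child, re‑establishing the invariant. Running the induction up to $s=\mathrm{height}(T)+1$ gives the parallel tree $=T$. I would note explicitly that this is where \emph{priority} (not arbitrary) writes are needed: an arbitrary write could install a later iteration's key at $L$ and corrupt the tree.

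For the depth, the number of rounds is $1+\max_i \mathrm{depth}_T(k_i)=1+\mathrm{height}(T)$. Since the transitive reduction of the iteration‑dependence graph is $T$ itself, $\mathrm{height}(T)$ equals the \idd{}, which the preceding lemma bounds by $O(\log n)$ \whp{}; as each round is $O(1)$ PRAM instructions (a read, a priority‑write, a re‑read, a comparison, a pointer update), the depth is $O(\log n)$ \whp{}. For the work, correctness shows iteration $i$'s parallel search path coincides with its path in $T$, so iteration $i$ does $O(1+\mathrm{depth}_T(k_i))$ operations; the same lemma gives $\mathrm{height}(T)=O(\log n)$ \whp{}, so every depth is $O(\log n)$ \whp{} and the total work is $\sum_i O(1+\mathrm{depth}_T(k_i))=O(n\log n)$ \whp{}, matching the sequential cost.

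I expect the main obstacle to be the correctness induction — in particular, being careful about the intra‑round semantics (a read, then a priority‑write that commits, then a re‑read that sees the committed value) and arguing cleanly that the minimum‑indexed writer to a contested location is always the ``correct'' one. The increasing‑index‑along‑paths fact is the crux that makes this go through; once it is in hand, the two complexity bounds reduce entirely to the $O(\log n)$‑\whp{} bound on the \idd{} already supplied by the preceding lemma.
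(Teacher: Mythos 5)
Your proposal is correct and follows essentially the same route as the paper's (much terser) proof: priority-writes make the earliest contending iteration win each contested slot, which yields the sequential tree, and the number of lockstep rounds equals one plus the height of the BST, i.e.\ the \idd{}, which the preceding lemma bounds by $O(\log n)$ \whp{}. Your induction on rounds and the increasing-index-along-paths observation simply spell out the details the paper compresses into ``whenever there is a dependence, the earliest \step{} wins.''
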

\begin{proof}
  They generate the same tree since whenever there is a dependence,
  the earliest \step{} wins.  The number of rounds of the while loop is
  bounded by the \idd ($O(\log n)$ \whp{}) since for each
  \step{}, each round checks a new dependence (i.e., each round
  traverses one level of the \idg).  Since each round
  takes constant depth on the priority-write CRCW PRAM with $n$
  processors, this gives the required bounds.
\end{proof}

Note that this gives a much simpler work-optimal logarithmic-depth
algorithm for comparison sorting than Cole's mergesort
algorithm~\cite{Cole88}, although it is on a stronger model
(priority-write CRCW instead of EREW) and is randomized.

\newcommand{\repboundary}{\textsc{ReplaceBoundary}}
\newcommand{\seqinc}{\textsc{IncrementalDT}}
\newcommand{\parinc}{\textsc{ParIncrementalDT}}
\newcommand{\incircle}{\textsc{InCircle}}
\newcommand{\gt}{G_T}
\newcommand{\tri}{t}
\newcommand{\points}{V}
\newcommand{\point}{v}
\newcommand{\dsimplex}{triangle}
\newcommand{\dsimplexes}{triangles}
\newcommand{\R}{\mathbb{R}}

\section{Delaunay Triangulation (Type 1)}\label{sec:delaunay}

A Delaunay triangulation (DT) in $d$ dimensions is a triangulation of
a set of points $P$ in $\R^d$ such that no point in $P$ is inside the
\emph{circumsphere} of any triangle (the sphere defined by the
triangle's $d+1$ corners).  Here we will use \emph{triangle} to mean a
$d$-simplex defined by $d+1$ corner points, and use \emph{face} to
mean a $d-1$ simplex with $d$ corner points.  We say a point
\defn{encroaches} on a triangle if it is in the triangle's
circumsphere, and will assume for simplicity that the points are in
general position, i.e., no $k \leq d + 1$ points on a
$(k-2)$-dimensional hyperplane, or $k \leq d + 2$ points on a
$(k-2)$-dimensional sphere.  Delaunay triangulation for $d=2$ can be
solved sequentially in optimal $O(n \log n + n^{\lceil d/2 \rceil})$
work.  There are also several work-efficient (or near work-efficient)
parallel algorithms for $d=2$ that run in polylogarithmic
depth~\cite{RS92b,Cole96,BHMT99}, and at least one for higher
dimension~\cite{AmatoGR94}, but they are all complicated.

% \cite{GrSi78,Bowyer81,Watson81,GuSt85}

The widely-used and simple incremental Delaunay algorithms date back
to the 1970s~\cite{GrSi78}.  They are based on the rip-and-tent idea:
for each point $p$ in order, rip out the triangles $p$ encroaches on
and tent over the resulting cavity with triangles from $p$ to each
boundary face of the cavity.  The algorithms differ in how the
encroached triangles are found, and how they are ripped and tented.
Clarkson and Shor~\cite{CS89} first showed that randomized incremental
convex hull is efficient, running with work $O(n \log n + n^{\lfloor
  d/2 \rfloor})$ in expectation.  These results imply optimal $O(n
\log n + n^{\lceil d/2 \rceil})$ work for DT.

%% If points are selected from a distribution so that the expected number
%% of triangles is bounded by $f(n)$, then the expected work of these
%% incremental methods is $O(\sum_{i=1}^n f(n/i))$~\cite{ES96}.  This can
%% be a significant improvement in three or higher dimensions.

Guibas et al.~\cite{GKS92} (GKS) showed a simpler direct randomized
incremental algorithm for 2d DT with optimal bounds, and this has
become the standard version described in
textbooks~\cite{Mulmuley94,BCKO,Edelsbrunner06} and often used in
practice.  The GKS algorithm uses a history of triangle updates to
locate a triangle $\tri$ that a new point $p$ encroaches.  It then
searches out for all other encroached triangles flipping pairs of
triangles as it goes.  Edelsbrunner and Shah~\cite{ES96} generalized
the GKS method to work in arbitrary dimension with optimal work (again
in expectation).  The algorithms, however, are inherently sequential
since for certain inputs and certain points in the input, the search
from $\tri$ will likely have depth $\Theta(n)$, and hence a single
\step{} can take linear depth.

Boissonnat and Teillaud~\cite{BT93} (BT) consider a someone different
but equally simple direct random incremental algorithm for DT that
does optimal work in arbitrary dimension.  Instead of using the
history to locate a single triangle that a point $p$ encroaches and
then searching out from it for the rest, it locates all encroached
triangles directly using the history.  It therefore does not suffer
the inherent sequential bottleneck of GKS.

\textbf{Our result.}  Here we show that an offline variant of the BT
algorithm has \idd{} $O(d \log n)$ \whp{}.  We further show
that the \step{}s can be parallelized leading to a very simple parallel
algorithm doing no more work than the sequential version (i.e.,
optimal), and with overall $O(d \log^2 n)$ depth \whp{}.

\SetKwInput{KwMaintains}{Maintains}%
\begin{algorithm}[!t]
\caption{\seqinc}\label{alg:delaunayseq}
\KwIn{A sequence $\points = \{\point_1,\ldots,\point_n\}$ of points in $R^d$.}
\KwOut{DT($\points$).}
\KwMaintains{A set of \dsimplexes{} $M$, and for each $\tri \in M$, the
  points that encroach on it, $E(\tri)$}
  %, and for each $p \in \points, the triangles it encroaches on, $E(p)$.}
 \vspace{.5em}
$\tri_b \gets$ a sufficiently large bounding \dsimplex{}\\
$E(\tri_b) \gets \points$\\
$M \gets \{t_b\}$ \\
\For {$i \leftarrow 1$ to $n$} {
     Let $R\gets \{t\in M~|~v_i\in E(t)\}$\\
     \ForEach {\upshape face $f$ on the boundary of $R$} {
       $(\tri,\tri_o) \gets$ the two \dsimplex{}s incident on $f$, with
       $\tri$ on the $v_i$ side\\
       \repboundary{}$(t_o,f,t, \point_i)$
    }
}
\Return {$M$}

  \vspace{.5em}
    \SetKwProg{myfunc}{function}{}{}
    \myfunc{\upshape\repboundary{}$(t_0,f,t, \point)$} {
      $\tri' \gets$ a new \dsimplex{} consisting of $f$ and $v$\\
      \mbox{$E(\tri') \gets \{v' \in E(\tri) \cup E(\tri_o)~|~\mbox{\incircle{}}(v',t')\}$}\label{line:test}\\
      detach $\tri$ from face $f$ in $M$\\
      add $\tri'$ to $M$
    }

\end{algorithm}

\begin{figure}
\begin{center}
  \includegraphics[width=.3\columnwidth]{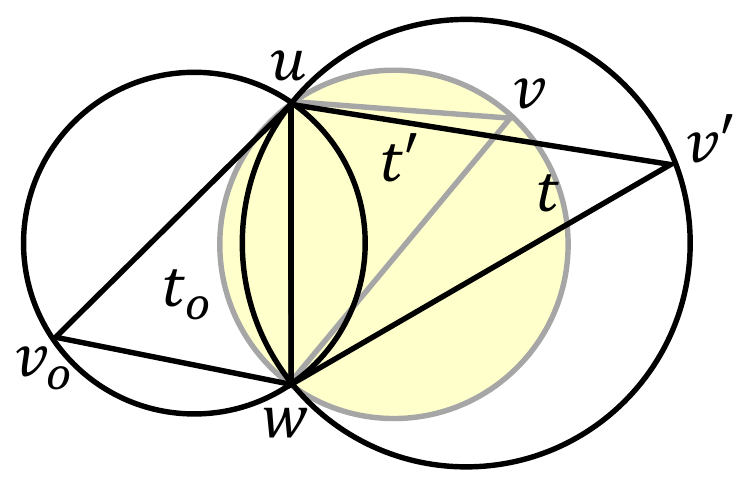}
\end{center}\vspace{-.5em}
\caption{An illustration of the procedure of \repboundary{}$(f,v)$ on
  a new point $v$ in two dimensions.    In this case the boundary face $f$ is $(u,w)$.
  The function will detach $\tri$ and replace it with $\tri'$.   The new triangle $\tri'$ only depends on $\tri_o$ and $\tri$.
  In support of Fact~\ref{lem:onlytwo}, we note
  that since $v$ encroaches on $\tri$ but not $\tri_o$, it
  must be in the larger black ball but not the smaller one.
  Therefore the yellow ball must be contained
  in the union of the two black balls and must contain the intersection
  of them.}\label{fig:delaunay}
\end{figure}

Our sequential variant of BT is described in
Algorithm~\ref{alg:delaunayseq}.  For each \dsimplex{} $\tri \in M$, the algorithm
maintains the set of uninserted points that encroach on $\tri$,
denoted as $E(\tri)$.  On each \step{}~$i$, the algorithm identifies the
boundary of the region that point $i$ encroaches on, and for each
face $f$ of that region it detaches the \dsimplex{} on the inside and replaces it
with a new \dsimplex{} $\tri'$ consisting of $f$ and $v$.  All work on
uninserted points is done in determining $E(\tri')$, which only
requires going through two existing sets, $E(\tri)$ and
$E(\tri')$.  This is justified by Fact~\ref{lem:onlytwo}.  Determining
the boundary of the region can be implemented efficiently by
maintaining a mapping from each point to the simplices it encroaches,
and checking those simplices.

\begin{fact} [\cite{BT93}]
  Given two $d$-simplices $\tri$ and $\tri_o$ that share a face $f$,
  and a point $v$ that encroaches on $\tri$ but not $\tri_o$, then for
  $\tri' = (f,v)$ we have
  $E(\tri) \cap E(\tri_o) \subseteq E(\tri') \subseteq E(\tri) \cup E(\tri_o)$.
\label{lem:onlytwo}
\end{fact}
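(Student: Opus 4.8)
The plan is to exploit the fact that the circumspheres of $\tri$, $\tri_o$, and $\tri' = (f,v)$ all pass through one common $(d-2)$-sphere $\partial$, namely the circumsphere of the shared face $f$ inside the hyperplane $H$ that $f$ spans. (The $d$ corners of $f$ are affinely independent and span $H$, and any $d$-sphere through them meets $H$ in a $(d-1)$-ball whose boundary $(d-2)$-sphere contains those $d$ points and hence equals $\partial$.) Writing every $d$-sphere in the normalized form $\sigma(x) = |x|^2 + b\cdot x + c$, the $d$-spheres through $\partial$ form a one-parameter pencil $\sigma_\tau = \sigma_0 + \tau L$, where $\sigma_0$ is any fixed member, $L$ is the affine equation of $H$ oriented so that $L > 0$ on the side containing $v$, and $\tau \in \mathbb{R}$ (the value $\tau = \infty$ gives the degenerate ``sphere'' $H$ itself, which cannot be a circumsphere of a nondegenerate simplex). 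Adding $\tau L$ keeps the coefficient of $|x|^2$ equal to $1$ and does not change values on $H$, so each $\sigma_\tau$ is genuinely a sphere through $\partial$; conversely the quadratic parts cancel in any difference of two such spheres, leaving an affine function vanishing on $\partial$, i.e. a multiple of $L$. Let $\tau_1, \tau_2, \tau_3$ be the parameters of the circumspheres of $\tri, \tri_o, \tri'$ and let $B_1, B_2, B_3$ be the corresponding open circumballs $\{\sigma_{\tau_i} < 0\}$; the goal is $B_1 \cap B_2 \subseteq B_3 \subseteq B_1 \cup B_2$.

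Next I would describe membership in these balls. On $H$ we have $L \equiv 0$, so $\sigma_{\tau_1}, \sigma_{\tau_2}, \sigma_{\tau_3}$ all restrict to the same quadratic $\sigma_0$ and $B_1 \cap H = B_2 \cap H = B_3 \cap H$, so both inclusions hold trivially for points of $H$. For $x \notin H$ the map $\tau \mapsto \sigma_\tau(x) = \sigma_0(x) + \tau L(x)$ is affine and nonconstant, so $x$ lies in $B_\tau$ precisely for $\tau$ on one side of the threshold $\tau^*(x) = -\sigma_0(x)/L(x)$: if $L(x) > 0$ (the $v$-side) then $x \in B_\tau \iff \tau < \tau^*(x)$, and if $L(x) < 0$ then $x \in B_\tau \iff \tau > \tau^*(x)$.

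Then I would pin down the order of the parameters from the hypothesis. Since $v$ lies on the $v$-side, $L(v) > 0$; ``$v$ encroaches on $\tri$'' gives $\sigma_{\tau_1}(v) < 0$, hence $\tau_1 < \tau^*(v)$; ``$v$ does not encroach on $\tri_o$'' gives $\sigma_{\tau_2}(v) \ge 0$ (strict under the paper's general-position assumption), hence $\tau_2 \ge \tau^*(v)$; and $v$ is a corner of $\tri'$, so $\sigma_{\tau_3}(v) = 0$, hence $\tau_3 = \tau^*(v)$. Therefore $\tau_1 < \tau_3 \le \tau_2$.

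Finally I would combine the threshold description with this ordering. For $x$ on the $v$-side: if $x \in B_1 \cap B_2$ then $\tau_2 < \tau^*(x)$, so $\tau_3 \le \tau_2 < \tau^*(x)$ and $x \in B_3$; if $x \in B_3$ then $\tau_1 < \tau_3 < \tau^*(x)$, so $x \in B_1 \subseteq B_1 \cup B_2$. For $x$ on the other side the inequalities reverse and the roles of $B_1$ and $B_2$ swap, yielding the same two implications; together with the trivial $H$ case this establishes both inclusions. The only nonroutine ingredient is the description of the pencil of $d$-spheres through $\partial$ and the verification that every member is a genuine unit-leading-coefficient sphere; I would prove this by the standard linear-algebra argument on the affine space of quadratics $|x|^2 + b\cdot x + c$ vanishing on the $d$ affinely independent corners of $f$, after which everything reduces to elementary one-variable reasoning.
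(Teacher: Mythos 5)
Your proof is correct. Note that the paper itself does not prove this statement: it is stated as a Fact imported from Boissonnat and Teillaud, and the paper only offers the pictorial intuition in Figure~\ref{fig:delaunay} (the new circumball must contain the intersection of the two old circumballs and be contained in their union). Your argument is a complete, rigorous version of exactly that picture: you parametrize the pencil of spheres through the circumsphere of the shared face $f$, observe that the three circumspheres are the members $\tau_1,\tau_3,\tau_2$ of this pencil, derive $\tau_1<\tau_3\le\tau_2$ from the hypothesis on $v$, and then read off both ball containments from the monotone threshold description on each side of the hyperplane of $f$. All the steps check out, including the side cases ($x$ in the hyperplane $H$, and the reversal of inequalities on the non-$v$ side), and the general-position assumption guarantees $v\notin H$ so that $L(v)>0$ is legitimate. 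This supplies a self-contained proof of what the paper treats as a black-box citation, and is the standard argument underlying the cited result.
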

This fact is proven in~\cite{BT93}, and an illustration of it is given in
Figure~\ref{fig:delaunay}.
A time bound for \seqinc{} of $O(n \log n + n^{\lceil d/2 \rceil})$
follows from the analysis of Boissonnat and Teillaud~\cite{BT93}.
Later we show a more precise bound on the number of \incircle{} tests
for $d = 2$, giving an upper bound on the constant factor for the
dominant term.

\subsection*{Dependence Depth and A Parallel Version}

We now consider the dependence depth of the algorithm.  One approach
is to consider dependences among the outer iterations (adding each
point).  Unfortunately it seems difficult to prove a logarithmic bound
on dependence depth for such an approach.  The problem is that
although a point will encroach on a constant number of triangles and
associated points in expectation, in some cases it could encroach on
up to a linear number.  Hence it does not have a $k$ bounded
dependence.  It seems that although expectation is good enough for the
work bound, it does not suffice for the depth bound since we need to
consider maximum depth over multiple paths.

We therefore consider a more fine-grained dependence structure based
on the triangles (\substep{}s) instead of points (top level
iterations).  The observation is that not all triangles added by a
point $v$ need to be added on the same round.  This will allows us to
show that Algorithm~\ref{alg:delaunayseq} has $k$ bounded nested
dependences.  We will make use of the following Lemma.

\begin{lemma}
  \label{lem:encroach}
  Consider two triangles $\tri$ and $\tri_o$ created by
  Algorithm~\ref{alg:delaunayseq} and sharing a face $f$ at some \step{}
  of the algorithm.  If the earliest point $v$ that encroaches on
  $\tri$ is earlier than any point that
  encroaches on $\tri_o$, then the algorithm will apply
  \repboundary$(\tri_o,f,\tri, v)$.
  %, and applying it at any time that $\tri$ and
  %$\tri_o$ are present will have the same effect.
\end{lemma}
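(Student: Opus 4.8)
The plan is to argue by induction on the \step{}s of the algorithm, tracking the pair of adjacent \dsimplexes{} $\tri,\tri_o$ and their associated encroachment sets $E(\tri),E(\tri_o)$ until the point $v$ is inserted. Let $v$ be the earliest point encroaching on $\tri$, and assume (by hypothesis) that no point encroaching on $\tri_o$ precedes $v$ in the insertion order. First I would observe that neither $\tri$ nor $\tri_o$ can be destroyed before \step{} $v$: a \dsimplex{} is detached (in \repboundary{}) only when some inserted point encroaches on it, and by assumption no point inserted before $v$ encroaches on either $\tri$ or $\tri_o$. (Here I use that the face $f$ is itself not destroyed — a face disappears only when one of its two incident \dsimplexes{} does — so $\tri$ and $\tri_o$ remain adjacent across $f$ throughout.) Hence at \step{} $v$, both $\tri$ and $\tri_o$ are still present in $M$, and $f$ is still on the shared boundary.

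Next, at \step{} $v$ the algorithm forms $R = \{t \in M \mid v \in E(t)\}$. Since $v$ encroaches on $\tri$ we have $\tri \in R$; since $v$ does not encroach on $\tri_o$ (because $v$ is the earliest encroacher of $\tri$, and if $v$ also encroached $\tri_o$ then $v$ would be an encroacher of $\tri_o$ no later than itself, contradicting the strict-earliness hypothesis) we have $\tri_o \notin R$. Therefore the face $f$, shared by $\tri \in R$ and $\tri_o \notin R$, lies on the boundary of $R$, so the \textbf{for each} loop over boundary faces of $R$ will process $f$. When it does, the two incident \dsimplexes{} are exactly $\tri$ (on the $v$ side, since $v$ encroaches on it) and $\tri_o$ (on the other side), so the call made is precisely \repboundary$(\tri_o,f,\tri,v)$, as claimed.

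The one subtlety I would be careful about — and the main obstacle — is confirming that $\tri$ and $\tri_o$ really do survive unchanged with $f$ on their common boundary until \step{} $v$, i.e., that nothing in an intermediate \step{} $i < v$ disturbs this local configuration. The key point is that \repboundary{} only detaches a \dsimplex{} $t$ from a face when $v_i \in E(t)$ for the current point $v_i$, and $E(t)$ is maintained (via Fact~\ref{lem:onlytwo} and the \incircle{} test on Line~\ref{line:test}) to be exactly the set of uninserted points encroaching on $t$; so a \dsimplex{} survives \step{} $i$ iff $v_i$ does not encroach on it. Since by hypothesis the first $|v|-1$ inserted points encroach on neither $\tri$ nor $\tri_o$, an easy induction on $i$ shows both persist with $f$ between them, completing the argument. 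I would also note in passing that $\tri$ and $\tri_o$ need not have been created at the same \step{}; the lemma statement allows them to coexist adjacently from whenever the later of the two was created, and the argument above only needs them both present from that point onward.
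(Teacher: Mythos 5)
Your proof is correct and follows essentially the same route as the paper's: both argue that, since a triangle is only detached by a point that encroaches on it and $v$ is the earliest encroacher of either $\tri$ or $\tri_o$, both triangles survive (still sharing $f$) until \step{} $v$, at which point $\tri\in R$, $\tri_o\notin R$, so $f$ is a boundary face of $R$ and \repboundary$(\tri_o,f,\tri,v)$ is invoked. Your version just spells out a couple of details the paper leaves implicit (that the hypothesis forces $v\notin E(\tri_o)$, and that the two triangles need not be created at the same \step{}).
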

\begin{proof}
  Firstly, $v$ must be later than the points defining $\tri_0$ otherwise
  $\tri$ is detached before $\tri_0$ is created and the two never share a face.
  Once $\tri$ and $\tri_o$ are created the only points that can remove
  them are points that encroach on the triangles.  Since $v$ is the
  earliest such point, and only encroaches on $\tri$, when running the
  \step{} that inserts $v$, the triangles $\tri_o$ and $\tri$ will still
  be there, and \repboundary$(\tri_o,f,\tri,v)$ will be applied.
  %Furthermore $E(\tri)$ and $E(\tri_o)$ will
  %be the same, generating the same $E(\tri)$ if applied at any time
  % $\tri$ and $\tri_o$ are present.
\end{proof}

%% We note that this parallel version seems simpler than the sequential
%% version since it does not require a mapping from points to encroached
%% triangles.  It is almost certainly simpler than the GKS algorithm,
%% especially in 3 or more dimensions.

We can now define a dependece graph $\gt(\points) = (T,E)$ based on
Lemma~\ref{lem:encroach}.  The vertices $T$ corresponds to triangles created by
Algorithm~\ref{alg:delaunayseq} (each sub-iteration), and for each
call to \repboundary$(\tri_o,f,\tri,\point)$ we include an arc from
each of $\tri_o$ and $\tri$ to the new triangle it creates $\tri'$.
%Note that we can associate each triangle with the point $\point_i$ that created it.

\begin{theorem}
  Algorithm~\ref{alg:delaunayseq} has iteration dependence depth $O(d
  \log n)$ \whp{} over the random orders of $V$, i.e. $D(\gt(\points))
  = O(d \log n)$ \whp{}.
\label{the:delaunaydepth}
\end{theorem}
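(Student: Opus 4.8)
The plan is to apply Theorem~\ref{thm:type1} to the nested dependence structure captured by $\gt(\points)$, so the entire argument reduces to verifying the two hypotheses of that theorem: that Algorithm~\ref{alg:delaunayseq} has $k$-bounded nested dependences for a constant (in $d$) value of $k$, and that its tail count is polynomially bounded. The depth bound $O(d\log n)$ then follows by choosing $\sigma = \Theta(d)$ in the theorem (recall $\sigma \ge k e^2$ is required, and $H_n = \Theta(\log n)$), which drives the failure probability to $n^{-\Omega(d)}$, hence high probability. So the proof is really just a bookkeeping exercise in instantiating the general machinery.

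**First I would** establish the $k$-bound. By Lemma~\ref{lem:encroach}, whenever a new triangle $\tri'$ is created via \repboundary$(\tri_o,f,\tri,v)$, it has incoming arcs in $\gt$ only from $\tri$ and $\tri_o$ --- exactly two predecessor triangles (this is Fact~\ref{lem:onlytwo}'s structural counterpart, already spelled out in the ``$2(d+1)$-bounded nested dependences'' discussion in Section~\ref{sec:incalg}). Now, in the backwards-analysis language of Theorem~\ref{thm:type1}, a sub-step is an element--sub-step pair $(v,\tri')$; the predecessor sub-step that produced $\tri$ (resp.\ $\tri_o$) was run by whichever of $\tri$'s $(d+1)$ corner points was inserted last among them, and similarly for $\tri_o$. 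Hence $(v,\tri')$ depends on at most $2(d+1)$ possible element--sub-step pairs, giving $k = 2(d+1)$-bounded nested dependences. I would state this cleanly, pointing back to Lemma~\ref{lem:encroach} for the ``at most two triangles'' part and to the general-position / last-inserted-corner argument for the factor $(d+1)$.

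**Next I would** bound the tail count. A tail is an element--sub-step pair $(v,\tri')$ on which nothing depends --- equivalently, $\tri'$ is never removed, i.e.\ $\tri'$ is a triangle of the final triangulation DT$(\points)$ (plus possibly triangles incident to the bounding simplex $\tri_b$, which I would note are $O(1)$ per final triangle or can be handled by the same counting). For each final triangle $\tri'$, the sub-step that created it could have been run by any of its $d+1$ corners, whichever was last; so each final triangle contributes at most $d+1$ possible tails. The number of simplices in a $d$-dimensional Delaunay triangulation of $n$ points is $O(n^{\lceil d/2\rceil})$ by the Upper Bound Theorem, so the tail count is $O(d\, n^{\lceil d/2\rceil})$, i.e.\ of the form $c n^b$ with $b = \lceil d/2\rceil$ and $c = O(d)$ constants (for fixed $d$). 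Plugging $k = 2(d+1)$, $b = \lceil d/2\rceil$, $c = O(d)$ into Theorem~\ref{thm:type1} and taking $\sigma$ a suitable constant multiple of $d$ (at least $k e^2$, and large enough that $\sigma - b = \Omega(d) \ge$ the desired \whp{} exponent) yields $D(\gt(\points)) \le \sigma H_n = O(d\log n)$ with probability at least $1 - n^{-\Omega(d)}$.

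**The main obstacle** --- really the only non-mechanical point --- is making sure the reduction to Theorem~\ref{thm:type1} is legitimate: one must check that $\gt(\points)$, as defined via Lemma~\ref{lem:encroach}, genuinely is an \idg{} in the sense of the definition, namely that running each triangle-creating sub-step only after both its $\gt$-predecessors have completed reproduces exactly the sequential computation of Algorithm~\ref{alg:delaunayseq}. This is where Lemma~\ref{lem:encroach} does the real work: it pins down \emph{which} \repboundary{} call creates each face-crossing, showing the creation of $\tri'$ is determined by $\tri$, $\tri_o$, and the identity of the earliest encroaching point, independent of the relative order of other insertions. I would also be slightly careful that the random order of \emph{points} induces the right marginal behavior on \emph{sub-steps} in the backwards analysis --- but this is exactly the situation the nested-dependence version of Theorem~\ref{thm:type1} was built for, so once the \idg{} property is confirmed, the rest is a direct citation.
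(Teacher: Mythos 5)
Your proposal is correct and follows essentially the same route as the paper's proof: establish $2(d+1)$-bounded nested dependences (two predecessor triangles, each attributable to any of its $d+1$ corners), bound the tail count by $O(d\,n^{\lceil d/2\rceil})$ (final triangles times $d+1$ possible creating corners), and plug into Theorem~\ref{thm:type1} with $\sigma = \Theta(d)$. Your added remark that one must confirm $\gt(\points)$ is genuinely an \idg{} via Lemma~\ref{lem:encroach} is a point the paper treats implicitly, but it does not change the argument.
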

\begin{proof}
This follows from Theorem~\ref{thm:type1}.  In particular the
algorithm has a $2(d+1)$-bounded nested dependence.  Each creation of a
triangle (sub-iteration) by a point $v$ depends on at most two previous
triangles, each of which depends on at most $d+1$ points (the corners
of a triangle).  Therefore adding the triangle for point $v$ depends
on $2 \cdot (d+1)$ possible previous subiterations.  It is important
to note that in a given run there will only be dependences to the two
triangles, but the definition of $k$-bounded dependence requires we
consider all possible dependences to sub-iterations and associated
elements (points).  The tail count (the number of possible
sub-iterations that ended a dependence chain) is bounded by the number
of triangles in the result, $O(n^{\lceil d/2 \rceil})$, times the
number of points that could have generated each triangle, at most
$(d+1)$, giving a total of $O(d n^{\lceil d/2 \rceil})$.  Plugging into
Theorem~\ref{thm:type1}, gives a dependence depth of
\[\Pr(D(G) \geq \sigma H_n) < d n^{-(\sigma - (d+1)/2)} \]
for $\sigma \geq 2(d+1) e^2$, satisfying the bounds.
\end{proof}

\begin{algorithm}[t]
\caption{\parinc{}}
\label{alg:delaunaypar}
    \SetKwProg{myfunc}{function}{}{}
    \SetKwFor{ParForEach}{parallel foreach}{do}{endfch}
\KwIn{A sequence $\points = \{\point_1,\ldots,\point_n\}$ of points in $\R^d$.}
\KwOut{DT($\points$).}%
\KwMaintains{$E(\tri)$, the points that encroach on each \dsimplex{} $\tri$.}
  \vspace{.5em}
$\tri_b \gets$ a sufficiently large bounding \dsimplex{}\\
$E(\tri_b) \gets \points$\\
$M \gets \{t_b\}$ \\
\While {\upshape $E(\tri) \neq \varnothing$ for any $\tri \in M$} {
  \ParForEach {\upshape $(\tri_o,\tri)$ sharing a face $f \in M$,\\
    \hspace*{.9in}s.t. $\min(E(\tri)) < \min(E(\tri_o)$ \label{line:tritest}} {
      \repboundary{}$(t_o,f,t,\min(E(\tri)))$\label{line:reptri}
    }
}
\Return {$M$}
\end{algorithm}

Algorithm~\ref{alg:delaunaypar} describes a parallel variant of
Algorithm~\ref{alg:delaunayseq} based on the dependence structure.  On
each round the parallel algorithm applies
\repboundary{}$(\tri_o,f,\tri,\min(E(\tri))$ to all faces that satisfy the
conditions of Lemma~\ref{lem:encroach}---$\tri$ and $\tri_o$ are
present, and $\min(E(\tri)) < \min(E(\tri_o))$.  We assume points
are indexed by their insertion order, such that $\min(E(\tri))$
returns the earliest point and comparing two indices compares their
insertion order.
%We also assume that there is a synchronization point
%before Line~\ref{line:reptri}.
The subroutine \repboundary{} is
unchanged.  Because of Lemma~\ref{lem:encroach}, the parallel variant
will make exactly the same calls to \repboundary{} as the sequential
variant, just in a different order.
We note that since the triangles for a given point can be added on
different rounds, the triangulation is not necessarily self consistent
after each round.  Importantly, a face might only have one adjacent
triangle.  In that case the face cannot proceed until it receives the
second triangle (or is the boundary of the DT).  Also the faces of a
triangle can be detached on different rounds.  This does not affect
the algorithm---once all boundary faces of a point have been replaced,
the old interior will be fully detached from the new triangulation.

To implement the algorithm one can maintain three data structures: (1)
the set of triangles that have been created, each with the set of
points that encroach on it, (2) a hashmap that maps faces to their up
to two neighboring triangles, and (3) the set of faces that satisfy
the condition on line Line~\ref{line:tritest}, which we refer to as
the active faces.  The hashmap is indexed on the $d$ corners of a face
in some canonical order.  Each round goes over all the active faces in
parallel, and runs \repboundary{}.  This involves first looking up the
neighboring triangles, running the incircle tests across their points
in parallel, and filtering out the ones that return true.  The
algorithm also finds the minimum indexed such point.  Then the new
triangle is added to the triangle set, the $d+1$ faces of the new
triangle are updated in the hashmap (some might be new), and the
subset of them that satisfy Line~\ref{line:tritest} are added to the
set of active faces.

Most steps are easily parallelizable.  Applying and filtering on the
\incircle{} tests, and allocating the new active faces for each
\repboundary{}, can use processor allocation and compaction.  This can
be done approximately---i.e., into a constant factor larger set of
locations.  On the CRCW PRAM the approximate version can be be
implemented work efficiently in $O(\log^* n)$ depth
\whp{}\,\cite{Gil91a}.  On the CRCW PRAM the hash table operations and
the minimum can also can be done work efficiently in $O(\log^* n)$
depth \whp{}\,\cite{Gil91a,Hagerup91}.

%% In two dimensions it is work efficient to run
%% Algorithm~\ref{alg:delaunaypar}) as written.  Each round processes at
%% most $2n$ triangles.  This is because triangles can only split (into
%% one, two or three new ones) so the number of triangles cannot decrease
%% on any round, and by Euler's formula there are at most $2n$ triangles
%% at the end.  Therefore the work is bounded by $O(n)$ times the number of
%% rounds, which gives optimal work bounds with high probability.  In
%% higher dimension we might loose a $O(\log n)$ factor using this
%% argument since multiplying the final size by the number of rounds is
%% of by a logarithmic factor.  We imagine it is possible to bound the
%% work anyway, but here we note it is easy to have the algorithm
%% only considers faces when one of the neighbors change.
%% Now the algorithm will do the same work as the sequential algorithm,
%% which is optimal in expectation.

\begin{theorem}
  \parinc{} (Algorithm~\ref{alg:delaunaypar}) runs in
  $O(d \log n \log^* n)$ depth \whp{}, and with work
  $O(n \log n + n^{\lceil d/2 \rceil})$ in expectation, on the CRCW PRAM.
  %% $O(d \log n \log* n)$ depth w.h.p..
  %% In two dimensions it runs in $O(n \log n)$
  %% work w.h.p., and for higher dimensions
  %% it runs in $O(n^{\lceil d/2 \rceil})$ work in expectation.
  %% All results are for the CRCW PRAM.
\label{the:delaunaypar}
\end{theorem}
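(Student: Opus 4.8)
The plan is to combine three ingredients that have all been established earlier in the paper: the dependence-depth bound from Theorem~\ref{the:delaunaydepth}, the observation that \parinc{} makes exactly the same \repboundary{} calls as the sequential algorithm (so the total work matches the sequential bound), and the per-primitive costs of the CRCW PRAM subroutines (approximate compaction, hashing, minimum) each running in $O(\log^* n)$ depth \whp{} and work proportional to the size they operate on. The structure of the argument is: first bound the number of rounds; then bound the work; then bound the depth; finally invoke Brent's theorem to state the final time bound on a PRAM with $P$ processors.

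\textbf{Number of rounds.} First I would argue that the number of iterations of the \textbf{while} loop is at most one more than $D(\gt(\points))$. By Lemma~\ref{lem:encroach}, a \repboundary$(\tri_o,f,\tri,v)$ call becomes enabled exactly when both $\tri$ and $\tri_o$ are present and $v = \min(E(\tri)) < \min(E(\tri_o))$. Each round of \parinc{} fires every currently enabled call; each such call creates a new triangle whose in-arcs in $\gt(\points)$ come from $\tri$ and $\tri_o$. Hence a triangle at depth $\ell$ in $\gt(\points)$ is created by the end of round $\ell$, so the loop terminates in $D(\gt(\points)) + O(1)$ rounds, which is $O(d\log n)$ \whp{} by Theorem~\ref{the:delaunaydepth}. (One subtlety to address: a face may be active with only one incident triangle and must wait for its second; this only delays the call, it does not create a longer dependence chain, so the bound is unaffected.)

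\textbf{Work.} Next I would observe that \parinc{} issues exactly the same multiset of \repboundary{} calls as \seqinc{}, only reordered, because the enabling condition is precisely the one characterized by Lemma~\ref{lem:encroach}. Therefore the total number of \incircle{} tests, triangle creations, hashmap updates, and active-face insertions is identical to the sequential algorithm, whose total is $O(n\log n + n^{\lceil d/2\rceil})$ in expectation by the analysis of Boissonnat and Teillaud~\cite{BT93}. The extra per-round overhead of the parallel implementation --- approximate compaction for the \incircle{} filtering and the new active faces, hash-table inserts and lookups, and the minimum computation --- is, on the CRCW PRAM, only a constant factor in work over the number of items touched~\cite{Gil91a,Hagerup91}. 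Summing over all rounds gives total work $O(n\log n + n^{\lceil d/2\rceil})$ in expectation.

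\textbf{Depth and conclusion.} Each round does a constant number of parallel primitives --- a parallel \textbf{foreach} over active faces, each invoking \incircle{} tests in parallel, a filter/compaction, a minimum, and a batch of hashmap updates --- each of which runs in $O(\log^* n)$ depth \whp{} on the CRCW PRAM. So the depth per round is $O(\log^* n)$ \whp{}, and over $O(d\log n)$ rounds the total depth is $O(d\log n\log^* n)$ \whp{}. Finally, combining the work and depth bounds via Brent's scheduling theorem~\cite{brent1974parallel} gives the stated time bound on a PRAM with $P$ processors. The main obstacle, and the step I would spend the most care on, is the round-count argument: making precise that firing all enabled \repboundary{} calls each round advances the frontier of $\gt(\points)$ by exactly one level, handling the ``face waiting for its second triangle'' and ``faces of a triangle detached on different rounds'' corner cases so that no dependence chain is lengthened, and confirming that no call is ever issued prematurely (which would break the equivalence with the sequential algorithm and hence the work bound).
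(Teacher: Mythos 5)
Your proposal is correct and follows essentially the same route as the paper's proof: the round count is bounded by the depth of the triangle dependence graph $\gt(\points)$ from Theorem~\ref{the:delaunaydepth}, the work matches the sequential bound because the multiset of \repboundary{} calls is identical (Lemma~\ref{lem:encroach}), and each round costs $O(\log^* n)$ depth via the CRCW PRAM primitives. You simply spell out the frontier-advancement induction and the ``face waiting for its second triangle'' corner case more explicitly than the paper, which states these points informally in the surrounding text.
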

\begin{proof}
The number of rounds of \parinc{} is $D(\gt(\points))$ since the
iteration dependence graph is defined by the dependences in the
algorithm.  Each round has depth $O(\log^* n)$ \whp{} as described
above, so the overall depth is as stated.  The work of the algorithm
is the same as the sequential work~\cite{BT93} since the calls to
\repboundary{} are the same, and all steps are work-efficient.
\end{proof}

\subsection*{Work bound for $d = 2$}  We now derive a work bound on
the number of \incircle{} tests in two dimensions that includes the
constant factor on the high-order term.  In the proof we take
advantage that, due to Fact~\ref{lem:onlytwo}, the \incircle{} test is
not required for points that appear in both $E(\tri_o)$ and $E(\tri)$
since they will always appear in $E(\tri')$.  We know of no previous
work that gives this bound.

\begin{theorem}
\seqinc{} for $d=2$ and on $n$ points in random order does at most $24 n \ln n +
O(n)$ \incircle{} tests in expectation.
\end{theorem}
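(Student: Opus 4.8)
The plan is to use backwards analysis on the dependence structure, but counting \incircle{} tests rather than dependence path length. The key observation is that a call to \repboundary$(\tri_o, f, \tri, v)$ performs \incircle{} tests only on the points in the symmetric difference $(E(\tri) \cup E(\tri_o)) \setminus (E(\tri) \cap E(\tri_o))$ — by Fact~\ref{lem:onlytwo}, points common to both sets are automatically in $E(\tri')$ and need no test. So I want to charge each \incircle{} test to a (point, triangle-creation) incidence and count these in expectation. First I would fix a point $v_i$ (the one being inserted) and a point $w$ that is tested; the test happens when $w$ encroaches on exactly one of two triangles $\tri, \tri_o$ sharing a face, and $v_i$ is the earliest point encroaching on $\tri$. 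I would set up the counting so that each test corresponds to $w$ lying on exactly one side of a Delaunay-type configuration determined by a small constant number of earlier points.

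The main technical step is a clean backwards-analysis argument. Consider the random insertion order restricted to the first $i$ points plus the point $w$. When we remove the $i$-th inserted point (call it $v$), the triangles it destroys are exactly the cavity triangles; each boundary face $f$ of the cavity gives one \repboundary{} call. The number of \incircle{} tests attributable to inserting $v$ at step $i$ and involving $w$ is bounded by the number of cavity boundary faces $f$ such that $w$ encroaches on exactly one of the two triangles incident to $f$ (one inside the cavity, one outside). I would argue that, conditioned on the set of the first $i$ points, this quantity is controlled by the degree of $v$ in the Delaunay triangulation of the first $i$ points, or more precisely by the number of faces $f$ whose "diametral region" (the lens-shaped intersection/union of the two circumballs, as in Figure~\ref{fig:delaunay}) has $w$ on its boundary in the appropriate one-sided sense. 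Using the standard fact that the expected degree of a random point among $i$ points in a planar triangulation is less than $6$ (since a planar triangulation on $i$ vertices has fewer than $3i$ edges, so average degree $< 6$), and that $v$ is a uniformly random one of the first $i$ points, the expected contribution is $O(1/i)$ per pair $(i, w)$.

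Summing over the $n - i$ choices of $w$ not among the first $i$ points, and over $i$ from $1$ to $n$, gives $\sum_{i} (n-i) \cdot \frac{c}{i} = O(n \log n)$ tests. The constant: each cavity boundary face contributes at most one test per point $w$ on the "one-sided" part; the expected number of cavity faces when inserting the $i$-th point is the expected degree of a random vertex, which is $< 6$; but $w$ can be on the one-sided part of at most... here I need to be careful — $w$ encroaching on exactly one triangle of a face means $w$ lies in the circumball of the inner triangle but not the outer (or vice versa). Across the (up to) $\deg(v)$ faces of the cavity, the union of inner circumballs is exactly the region $v$ encroaches on, so a single $w$ can straddle at most all of the boundary faces, but summed over all $w$ the total is bounded by $\sum_f (|E(\tri_f)| + |E(\tri_{o,f})|)$ minus the common parts. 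The cleanest route to the explicit constant $24$ is: expected cavity size $\le 6$ at each of the relevant steps (actually one must track that in backwards analysis we get a factor like $2 \cdot 6 = 12$ from counting both triangles per face, then another factor $2$ from the two-sided nature of "one of the two"), times the harmonic sum $\sum 1/i$, times $n$, yielding $24 n \ln n + O(n)$. The additive $O(n)$ absorbs boundary effects from the bounding triangle $\tri_b$ and the low-order terms in the degree bound (exactly $3i - 6$ edges, etc.).

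The hard part will be making the backwards-analysis bookkeeping rigorous: precisely identifying, for fixed $w$ and fixed first-$i$-point-set, which cavity faces of the random removed vertex $v$ generate a test on $w$, and showing this equals (or is dominated by) a quantity whose expectation over the choice of $v$ is at most $12/i$ (giving the $24$ after the left/right doubling and the union-style sum over $w$). This requires the geometric claim that "$w$ is tested at face $f$ when removing $v$" is equivalent to a condition depending only on $w$, $v$, and the two circumballs at $f$ — essentially that the history's redundancy-avoidance (skipping points in $E(\tri) \cap E(\tri_o)$) exactly removes the double-counting that would otherwise give $36 n \ln n$ or worse. I would verify this using Fact~\ref{lem:onlytwo} together with a direct lens-geometry argument as sketched in the caption of Figure~\ref{fig:delaunay}.
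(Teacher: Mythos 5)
You have correctly identified the two structural ingredients: (i) by Fact~\ref{lem:onlytwo}, a point $w$ is tested in a call to \repboundary{} only when $w$ encroaches on exactly one of the two triangles sharing the face $f$, i.e., only when a face on the boundary of $w$'s encroached region is replaced; and (ii) the constant should come from the expected-degree-at-most-$6$ fact for planar triangulations. But the step that actually produces the crucial $1/i$ factor is missing, and the route you propose for it does not deliver it. You bound the number of tests on $w$ at step $i$ by the number of boundary faces of $v$'s cavity that $w$ ``sees,'' and then invoke that the expected cavity size of a uniformly random $v$ among the first $i$ points is at most $6$. That yields an expected contribution of $O(1)$ per pair $(i,w)$, not $O(1/i)$: the averaging over the random choice of $v$ is already spent obtaining the constant $6$ and cannot also supply a factor of $1/i$. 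Summing $O(1)$ over $i$ and $w$ gives $O(n^2)$. Nothing in your argument makes it unlikely that a face visible to $w$ is destroyed at step $i$ specifically (many of the first $i$ points, not just a constant number, may encroach on one side of a given face), and you yourself flag that the claimed $12/i$ bound is the unproven ``hard part.''

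The paper closes exactly this gap by counting \emph{creations} rather than destructions. It defines the winged edge of a boundary face $(u,w')$ as the four-tuple consisting of the face's two endpoints and the two apexes of its incident triangles; a test on $x_j$ occurs precisely when a winged edge on the boundary of $x_j$'s encroached region is deleted, and each deletion is charged to the step that created that winged edge (the three initial winged edges of the bounding triangle, and the at least three uncharged boundaries present when $x_j$ is inserted, net out into the $O(n)$ term). The key observation you are missing is that a winged edge on $x_j$'s boundary can be created at step $i$ only if $x_i$ is one of its \emph{four} defining points. Backwards analysis then gives: immediately after step $i$, the expected number of winged edges on $x_j$'s boundary is at most $6$ (treating $x_j$ as a uniformly random one of $i+1$ points), and each such winged edge was created at step $i$ with probability at most $4/i$, so $E[Y_{ij}]\le 6\cdot 4/i = 24/i$. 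Your constant accounting ($2\cdot 6\cdot 2$) does not correspond to any derived bound and appears to be reverse-engineered. To repair the proof you need the deletion-to-creation charging together with the ``only four points can create a given winged edge'' observation; the degree-$6$ fact you already have.
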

\begin{proof}

  We denote the point added at \step{} $i$ as $x_i$.  For an iteration
  $j$ we consider the history of \step{}s $i < j$, and we are interested
  in the ones that do \incircle{} tests on $x_j$.  For each such \step{}
  we consider the boundary of the region that $x_j$ encroaches
  immediately after \step{} $i$.  We will bound the number of \incircle{}
  tests on point $x_j$ based on the changes to this boundary over the
  \step{}s.  We define each face of the boundary by its two endpoints
  $(u,w)$ along with the (up to) two points sharing a triangle with
  $(u,w)$, which we denote as the four tuple $(u,w;v_l,v_r)$, and
  refer to as a \emph{winged edge}.  For example in Figure~\ref{fig:delaunay}
  the winged edge $(u,w;v_o,v)$ corresponds to the edge $(u,w)$ after
  adding $v$.

  In \repboundary{} a point is only tested for
  encroachment (an \incircle{} test) if its boundary winged edge
  $(u,w;v_l,v_r)$ is being deleted, and replaced with
  another.    This is because a point only needs to be tested if it
  encroaches on one side (one wing) and not the other.
  It seems to be messy to keep track of the deletions, however, so
  instead we keep track of additions of these boundaries.  We can then
  charge each deletion against the addition---i.e., we do the
  \incircle{} test on the deletion, but ``pay'' for it earlier on the
  addition.  This means we have to include some charge for the
  initial additions at the start of the algorithm.  This is $3$ per
  point, one for each edge of the bounding triangle.   However, when we add $x_j$ it has at least $3$ boundaries
  we don't have to pay for, so the net additional tests needed for this accounting
  method is at most zero.
  Let $Y_{ij}$ be the random variable specifying the number of
  boundaries for point $x_j$ that \step{} $i$ adds (i.e., winged
  edges that include $x_i$, and are on the boundary of $x_j$'s encroached
  region when added).   The total number of \incircle{} tests $C$ is then
bounded by \[C \leq \sum_{j=2}^n \sum_{i=1}^{j-1} Y_{ij}~.\]

  To analyze the expectation $E[Y_{ij}]$ we note that we can consider
  point $x_j$ as immediately following \step{} $i$ (since no other point
  $x_k, k > i$ has been added yet).  All points $x_1,\ldots,x_i,x_j$
  are equally likely to be selected as $x_j$, so the expected number
  of boundaries for $x_j$ is at most 6 (due to the fact that planar
  graphs can have average degree at most 6).  Each boundary winged
  edge has 4 points that could create it, any of which could be at
  position $i$.  Therefore $E[Y_{ij}]$, is upper bounded by the at
  most 6 boundaries in expectation, times the at most four points
  (worst case) and divided by the $i$ possible points
  $x_1,\ldots,x_i$, each equally likely.  This gives $E[Y_{ij}] \leq 6
  \times 4 /i = 24/i$, leading to the claimed result:
\vspace{-.1in}
\[\mathbb{E}[C] \leq \sum_{j=2}^n \sum_{i=1}^{j-1} E[Y_{ij}] = \sum_{j=2}^n
\sum_{i=1}^{j-1} \frac{24}{i} \leq 24 n \ln n + O(n).\]
\end{proof}

We note that it is easier to prove a looser $36n \ln n+ O(n)$ bound.
Basically every point encroaches on 4 triangles in expectation on each
\step{}, and each triangle has 3 points.  Now each of these triangles can
involve one, two, or three in-circle tests for an encroaching point
when is removed (depending on how many of its edges are on the
boundary of the encroached region).  This gives at most an expected
$3 \times 4 \times 3 / i$ per \step{} $i$ leading to the
$36n \ln n+ O(n)$ upper bound.  This is similar to the proof given by
GKS~\cite{GKS92} and appearing in some textbooks~\cite{BCKO}.

\section{Linear-Work Algorithms (Type 2)}
% in Computational Geometry}
\label{sec:type2}
In this section, we study several problems from low-dimensional
computational geometry that have linear-work randomized incremental
algorithms. These algorithms fall into the Type 2 category of
algorithms defined in Section~\ref{sec:incalg2}, and their iteration
depth is polylogarithmic \whp{}. To obtain
linear-work parallel algorithms, we process the \step{}s in prefixes, as
described in Section~\ref{sec:incalg2}.
%% we cannot process all of the
%% \step{}s in every round, and instead processes approximately $2^k$
%% \step{}s in round $k$.
For simplicity, we describe the algorithms for
these problems in two dimensions, and and briefly note how they can
be extended to any fixed number of dimensions.

\subsection{Linear Programming}\label{sec:lp}
Constant-dimensional linear programming (LP) has received significant
attention in the computational geometry literature, and several
parallel algorithms for the problem have been
developed~\cite{Chen2002,Deng1990,Goodrich97,Dyer1995,Ajtai1992,Goodrich1996,Sen95,Alon1994}.
We consider linear programming in two dimensions. We assume that the
constraints are given in general position and the solution is either
infeasible or bounded. We note that these assumptions can be removed
without affecting the asymptotic cost of the
algorithm~\cite{Seidel1991}.  Seidel's~\cite{Seidel1991} elegant and
very simple randomized incremental algorithm adds the constraints
one-by-one in a random order, while maintaining the optimum point at
any time.  If a newly added constraint causes the optimum to no longer
be feasible (a tight constraint), we find a new feasible optimum point
on the line corresponding to the newly added constraint by solving a
one-dimension linear program, i.e., taking the minimum or maximum of
the set of intersection points of other earlier constraints with the
line. If no feasible point is found, then the algorithm reports the
problem as infeasible.

The iteration dependence graph is defined with the constraints as
\step{}s, and fits in the framework of Type 2 algorithms from
Section~\ref{sec:incalg2}. The \step{}s corresponding to inserting a
tight constraint are the special \step{}s.  Special \step{}s depend on all
earlier \step{}s because when a tight constraint executes, it needs to
look at all earlier constraints. Non-special \step{}s depend on the
closest earlier special \step{} $i$ because it must wait for \step{} $i$ to
execute before executing itself to retain the sequential execution (we
can ignore all of the earlier constraints since $i$ will depend on
them). Using backwards analysis, a \step{} $j$ has a probability of at most $2/j$ of being a
special \step{} because the optimum is defined by at most two constraints
and the constraints are in a randomized order.
Furthermore, the probabilities (event of being a special \step{}) are independent among different \step{}s.
%These probabilities are independent...
% This gives us a
% dependence depth of $O(\log n)$ w.h.p. as discussed in
% Section~\ref{sec:incalg}.

As described in the proof of Theorem~\ref{thm:type2}, our parallel
algorithm executes the \step{}s in prefixes. Each time a prefix is
processed, it checks all of the constraints and finds the earliest one
that causes the current optimum to be infeasible using line-side
tests. The check per \step{} takes $O(1)$ work and processing a violating
constraint at \step{} $i$ takes $O(i)$ work and $O(1)$ depth \whp{} to
solve the one-dimensional linear program which involves
minimum/maximum operations.  Applying Theorem~\ref{thm:type2} with
$d(n) = O(1)$ gives the following theorem.

\begin{theorem}
Seidel's randomized incremental algorithm for 2D linear programming has
iteration dependence depth $O(\log n)$ and can be
parallelized to run in $O(n)$ work in expectation and $O(\log n)$
depth \whp{} on an arbitrary-CRCW PRAM.
\end{theorem}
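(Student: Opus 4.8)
The plan is to verify the two hypotheses of Theorem~\ref{thm:type2} for Seidel's algorithm and then invoke that theorem as a black box. First I would set up the iteration dependence graph: the \step{}s are the constraints in their random insertion order, the special \step{}s are exactly the \emph{tight} constraints (those whose insertion makes the current optimum infeasible, forcing a one-dimensional LP solve over all earlier constraints), and the first \step{} is trivially special. I need to check that the dependence structure matches the Type~2 template: a tight constraint at \step{} $j$ reads all earlier constraints, so it has arcs to every $i<j$; a non-tight constraint at \step{} $j$ needs only to know the current optimum, which was last changed by the closest earlier special \step{}, so it has a single arc to that \step{}. (Here I rely on the fact that once the optimum is correct after the last tight constraint, a non-tight constraint can be inserted without consulting the intervening non-tight constraints.)

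The second hypothesis is the probability bound: I need $\Pr[\step{} j \text{ is special}] \le c/j$ with $c$ constant, independent of the choices at $\step{}s j+1,\dots,n$. This is the standard backwards-analysis argument for Seidel's LP: condition on the set $S_j$ of the first $j$ constraints (but not their internal order); the optimum of $S_j$ is determined by at most two constraints in $S_j$ (in 2D); \step{} $j$ is special precisely when the $j$-th inserted constraint is one of those (at most) two defining constraints; since all $j$ constraints are equally likely to be the last one inserted among $S_j$, this probability is at most $2/j$. Crucially this bound depends only on which constraints are among the first $j$, not on the order of constraints $j+1,\dots,n$, giving the required independence. So the hypothesis holds with $c=2$, and the events across distinct \step{}s are independent.

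With both hypotheses in hand, Theorem~\ref{thm:type2} immediately gives iteration dependence depth $O(\log n)$ \whp{}, and an implementation in $O(n)$ expected work and $O(d(n)\log n)$ depth \whp{}. It then remains to bound $d(n)$, the depth of processing a special \step{}. Processing a tight constraint at \step{} $i$ means solving a one-dimensional LP: intersect the $i-1$ earlier constraint lines with the new constraint line and take a min/max of the resulting scalars (reporting infeasibility if the feasible interval is empty). Each intersection and side test is $O(1)$ work, and the min/max over $O(i)$ values runs in $O(1)$ depth \whp{} on the arbitrary-CRCW PRAM by~\cite{Vishkin10}; checking whether each constraint in a prefix is violated is likewise $O(1)$ per constraint. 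Hence $d(n)=O(1)$, and plugging into Theorem~\ref{thm:type2} yields $O(n)$ expected work and $O(\log n)$ depth \whp{}, which is the claimed bound.

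The main obstacle, as usual for this style of argument, is not the Chernoff/backwards-analysis machinery (which is packaged inside Theorem~\ref{thm:type2}) but the \emph{structural} claims: verifying that a non-tight constraint genuinely depends only on the most recent tight constraint and not on the intervening constraints, and verifying that the ``special'' probability bound is independent of the order of the later constraints so that the independence hypothesis of Theorem~\ref{thm:type2} is legitimately met. Both follow from the fact that Seidel's algorithm carries exactly one piece of state forward---the current optimum point---and that state is a function only of the \emph{set} of inserted constraints once the last tight one is fixed; I would state this carefully and then the rest is a direct appeal to Theorem~\ref{thm:type2} with $d(n)=O(1)$.
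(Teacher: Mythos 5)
Your proposal is correct and follows essentially the same route as the paper: identify tight constraints as the special \step{}s, verify the Type~2 dependence structure and the $2/j$ backwards-analysis bound (independent of later \step{}s), note that the one-dimensional LP at a special \step{} reduces to a min/max computable in $O(1)$ depth \whp{} on the arbitrary-CRCW PRAM, and invoke Theorem~\ref{thm:type2} with $d(n)=O(1)$. Your extra care in justifying why a non-tight \step{} depends only on the closest earlier special \step{} is a welcome elaboration of what the paper states more tersely, but it is not a different argument.
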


We note that the algorithm can be extended to the case where the
dimension $d$ is greater than two by having a randomized incremental
$d$-dimensional LP algorithm recursively call a randomized incremental
algorithm for solving $(d-1)$-dimensional LPs.  This increases the
iteration dependence depth (and hence the depth of the algorithm) to
$O(d!\log^{d-1}n)$ \whp{}.  The work bound is $O(d!n)$ as in the
sequential algorithm~\cite{Seidel1991}.  We note that although the
work is optimal in $n$ it is not as good as the best sequential or
parallel algorithms~\cite{Alon1994} as a function of $d$, but is very
much simpler.

% We note that we can generate the random permutation only
% once and reuse it for the sub-problems. Although we lose independence,
% the expectation is not affected, and since there are only a constant (a function of $d$)
% number of sub-problems with high probability, the high probability
% bound for the depth is not affected.

%% The algorithm for solving a $d$-dimensional linear program again
%% inserts the constraints in a random order, and when adding a
%% constraint causes the optimum to become infeasible, it solves a
%% $d-1$-dimensional linear program to find the new optimum. The
%% probability the optimum becomes infeasible when inserting constraint
%% $i$ is at most $d/i$, so the iteration depth ignoring the recursive
%% calls is $O(d\log n)$ w.h.p. By including the recursive calls, we have
%% the following recurrence: $D(n,d) = O(d\log n) + D(n,d-1)$ and $D(n,2)
%% = O(\log n)$. This solves to $O(d!\log^{d-1}n)$.

\subsection{Closest Pair}\label{sec:cp}
The \defn{closest pair} problem takes as input a set of points in the
plane and returns the pair of points with the smallest distance
between each other. We assume that no pair of points have the same
distance. A well-known expected linear-work
algorithm~\cite{Rabin76,Khuller1995,Golin1995,Har-peled2011} works by
maintaining a grid and inserting the points into the grid in a random
order. The grid partitions the plane into regions of size $r \times r$
where each non-empty region stores the points inside the region and
$r$ is the distance of the closest pair so far (initialized to the
distance between the first two points). It is maintained using a hash
table. Whenever a new point is inserted, one can check the region the
point belongs in and the eight adjacency regions to see whether the new
value of $r$ has decreased, and if so, the grid is rebuilt with the
new value of $r$. The check takes $O(1)$ work as each region can
contain at most nine points, otherwise the grid would have been rebuilt
earlier. Therefore insertion takes $O(1)$ work, and rebuilding the
grid takes $O(i)$ work where $i$ is the number of points inserted so
far. Using backwards analysis, one can show that point $i$ has
probability at most $2/i$ of causing the value of $r$ to decrease, so
the expected work is $\sum_{i=1}^nO(i)\cdot (2/i) = O(n)$.

This is a Type 2 algorithm, and the iteration dependence graph is
similar to that of linear programming. The special \step{}s are the
ones that cause the grid to be rebuilt, and the dependence depth is
$O(\log n)$ \whp{}.
Rebuilding the grid involves hashing, and
can be done in parallel in $O(i)$ work and $O(\log^*i)$ depth
\whp{} for a set of $i$ points~\cite{Gil91a}. We also assume that the
points in each region are stored in a hash table, to enable efficient
parallel insertion and lookup in linear work and $O(\log^*i)$ depth.
%% As there are $O(\log n)$ points that cause rebuilding, this gives an
%% overall depth of $O(\log n\log^*n)$ w.h.p.
To obtain a linear-work parallel algorithm, we
again execute the algorithm in prefixes. Applying Theorem~\ref{thm:type2} with $d(n) = O(\log^*n)$ gives the
following theorem.

\begin{theorem}
The randomized incremental algorithm for closest pair can be
parallelized to run in $O(n)$ work in expectation and $O(\log n\log^*n)$
depth \whp{} on an arbitrary-CRCW PRAM.
\end{theorem}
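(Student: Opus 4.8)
The plan is to reduce the statement directly to Theorem~\ref{thm:type2} by exhibiting the randomized incremental closest-pair algorithm as a Type~2 incremental algorithm, and then to instantiate the parameter $d(n)$ of that theorem with the cost of a parallel grid rebuild.

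First I would fix the iteration dependence graph. The \step{}s are the point insertions in random order; a \step{} is \emph{special} if inserting that point decreases $r$ and triggers a rebuild of the grid, and \emph{regular} otherwise. I would then verify the three structural requirements of a Type~2 algorithm: (i) a special \step{} must inspect every earlier point when rebuilding, so it has a dependence arc to all earlier \step{}s; (ii) a regular \step{} only needs the current grid, which is completely determined by the most recent rebuild, so it has a single dependence arc to the closest earlier special \step{}; and (iii) the probability that \step{} $j$ is special is at most $2/j$, independent of the order of \step{}s $j+1,\dots,n$. Claim (iii) is the standard backwards-analysis argument: conditioned on the (unordered) set of the first $j$ points and on which of them is the $j$-th, general position makes the closest pair among those $j$ points a unique pair $\{a,b\}$; inserting the $j$-th point strictly decreases $r$ exactly when that point is $a$ or $b$; since the $j$-th point is uniformly random among the $j$, this has probability $2/j$, and the argument never refers to the later \step{}s.

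Given (i)--(iii), Theorem~\ref{thm:type2} immediately yields that the algorithm has \idd{} $O(\log n)$ \whp{}, and that Algorithm~\ref{alg:type2}, run on prefixes of geometrically increasing size, reproduces the sequential execution (hence the correct closest pair) in $O(n)$ expected work and $O(d(n)\log n)$ depth \whp{}, where $d(n)$ bounds the depth of processing one special \step{}. It then remains to bound $d(n)$. Processing a special \step{} means rebuilding the grid from scratch with the new $r$ over the at most $n$ points inserted so far: this is a parallel construction of a hash table mapping occupied cells to the (at most nine) points they contain, which by the cited hashing primitive~\cite{Gil91a} takes $O(i)$ work and $O(\log^* i)=O(\log^* n)$ depth \whp{} on the arbitrary-CRCW PRAM. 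The $O(i)$ work per rebuild is exactly the per-special-\step{} cost assumed in Theorem~\ref{thm:type2}, so the work accounting there (an expected $O(1)$ special \step{}s per prefix over geometrically growing prefixes) gives total expected work $O(n)$ unchanged. Plugging $d(n)=O(\log^* n)$ into Theorem~\ref{thm:type2} gives $O(n)$ expected work and $O(\log n\log^* n)$ depth \whp{}, as claimed.

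The one point I would write out most carefully, and the only mild obstacle, is reconciling the ``$O(1)$ work per regular \step{}'' template of Type~2 algorithms with the fact that the parallel realization of a regular \step{} — locating a point's cell and its eight neighbours and inserting the point into the per-cell hash table — uses hashing and so costs $O(\log^* n)$ depth rather than $O(1)$. This is handled by observing that all regular \step{}s within a sub-round run concurrently, so the extra depth is paid only once per sub-round and is absorbed into the $d(n)=O(\log^* n)$ factor; the work of each regular \step{} is still $O(1)$, so the $O(n)$ expected-work bookkeeping of Theorem~\ref{thm:type2} is untouched. Everything else — in particular the independence of the ``special'' events across \step{}s and the Chernoff bound on the number of special \step{}s — is inherited verbatim from the proof of Theorem~\ref{thm:type2}.
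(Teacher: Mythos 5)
Your proposal is correct and follows essentially the same route as the paper: cast the algorithm as Type~2 with the grid rebuilds as special \step{}s (probability at most $2/j$ by backwards analysis), bound the rebuild depth by $O(\log^* n)$ via parallel hashing, and apply Theorem~\ref{thm:type2} with $d(n)=O(\log^* n)$. Your extra care about the $O(\log^* n)$-depth hash lookups in regular \step{}s matches the paper's remark that per-region hash tables give $O(\log^* i)$-depth insertion and lookup.
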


We note that the algorithm can be extended to $d$ dimensions where the
depth is $O(\log d\log n\log^*n)$ \whp{} and expected work is $O(c_dn)$
where $c_d$ is some constant that depends on $d$.

\subsection{Smallest Enclosing Disk}\label{sec:sed}
The \defn{smallest enclosing disk} problem takes as input a set of
points in two dimensions and returns the smallest disk that contains
all of the points. We assume that no four points lie on a
circle. Linear-work algorithms for this problem have been
described~\cite{Megiddo83,Welzl1991}, and in this section we will
study Welzl's randomized incremental algorithm~\cite{Welzl1991}. The
algorithm inserts the points one-by-one in a random order, and
maintains the smallest enclosing disk so far (initialized to the
smallest disk defined by the first two points).  Let $v_i$ be the
point inserted on the $i$'th iteration.  If an inserted point $v_i$
lies outside the current disk, then a new smallest enclosing disk is
computed. We know that $v_i$ must be on the smallest enclosing
disk. We first define the smallest disk containing $v_1$ and $v_i$,
and scan through $v_2$ to $v_{i-1}$, checking if any are outside the
disk (call this procedure \textbf{Update1}). Whenever $v_j$ ($j<i$) is
outside the disk, we update the disk by defining the disk containing
$v_i$ and $v_j$ and scanning through $v_1$ to $v_{j-1}$ to find the
third point on the boundary of the disk (call this procedure
\textbf{Update2}). \textbf{Update2} takes $O(j)$ work, and
\textbf{Update1} takes $O(i)$ work plus the work for calling
\textbf{Update2}. With the points given in a random order, the
probability that the $j$'th iteration of \textbf{Update1} calls
\textbf{Update2} is at most $2/j$ by a backwards analysis argument, so
the expected work of \textbf{Update1} is $O(i) +
\sum_{j=1}^i(2/j)\cdot O(j) = O(i)$. The probability that
\textbf{Update1} is called when the $i$'th point is inserted is at
most $3/i$ using a backwards analysis argument, so the expected work
of this algorithm is $\sum_{i=1}^n(3/i)\cdot O(i) = O(n)$.

This is another Type 2 algorithm whose iteration dependence graph is
similar to that of linear programming and closest pair.  The points
are the \step{}s, and the special \step{}s are the ones that cause
\textbf{Update1} to be called, which for \step{} $i$ has at most $3/i$
probability of happening. The dependence depth is again $O(\log n)$
\whp{} as discussed in Section~\ref{sec:incalg2}.

%% and a step $j$ depends on an earlier
%% step $i<j$ if in the sequential ordering, step $i$ causes
%% \textbf{Update1} to be called, and is the last one that does so before
%% $v_j$ is added. If step $j$ causes \textbf{Update1} to be called then it depends on all steps $i<j$.
%% %% Furthermore, if inside a call to \textbf{Update1}, $i$ causes
%% %% \textbf{Update2} to be called, we add a sub-graph to the
%% %% Furthermore, a step $j$ also depends on an earlier step $i<j$
%% %% if in the sequential ordering, if inside a call to \textbf{Update1},
%% %% $i$ causes \textbf{Update2} to be called, and is the last one that
%% %% does so before $p_j$ is processed.
%% The probability of the first event happening is at most $3/j$ (there
%% are at most 3 points defining the disk when $v_j$ is added), and the
%% probability of the second event happening is also at most $3/j$ ($v_j$
%% has to be one of the 3 points defining the disk when it is processed).
%% Therefore, $\hp_i = 6/i$. Every other edge in the iteration dependence
%% graph is independent, so applying Theorem~\ref{thm:lowdepth} gives an
%% iteration depth of $O(\log n)$ w.h.p.

Our work-efficient parallel algorithm again uses prefixes, both when
inserting the points, and on every call to \textbf{Update1}.  We
repeatedly find the earliest point that is outside the current disk by
checking all points in the prefix with an in-circle test and taking
the minimum among the ones that are outside. \textbf{Update1} is
work-efficient and makes $O(\log n)$ calls to \textbf{Update2} \whp{},
where each call takes $O(1)$ depth \whp{} as it does in-circle tests
and takes a maximum.  As in the sequential algorithm, each \step{} takes
$O(1)$ work in expectation.
%% As in the earlier problems, we can show that a
%% constant number of rounds are required per prefix, so the algorithm is
%% work-efficient. Therefore the expected work is $O(n)$.
Applying Theorem~\ref{thm:type2} with
$d(n) = O(\log n)$ \whp{} (the depth of a executing a \step{} and calling
\textbf{Update1}) gives the following theorem.

\begin{theorem}
The randomized incremental algorithm for smallest enclosing disk can be
parallelized to run in $O(n)$ work in expectation and $O(\log^2 n)$
depth \whp{} on an arbitrary-CRCW PRAM.
\end{theorem}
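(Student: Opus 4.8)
The plan is to instantiate the generic Type~2 machinery of Theorem~\ref{thm:type2} with the specific cost parameters of Welzl's algorithm, exactly as was done for linear programming and closest pair in the two preceding subsections. First I would confirm that the algorithm genuinely fits the Type~2 template: the \step{}s are the point insertions in random order, a \step{} $i$ is \emph{special} exactly when $v_i$ falls outside the current smallest enclosing disk (triggering \textbf{Update1}), and a regular \step{} has a single dependence on the closest earlier special \step{}. The structural conditions are immediate --- a special \step{} must scan all earlier points, so it depends on all $\step{}s\ i'<i$; a regular \step{} need only wait for the last disk recomputation. The probabilistic condition requires that the event ``\step{} $i$ is special'' has probability at most $c/i$, independent of \step{}s $i{+}1,\dots,n$; this is the standard backwards-analysis fact for smallest enclosing disk, using that the optimal disk through the first $i$ points is determined by at most $3$ of them, so at most $3$ of the $i$ equally-likely choices for ``which point is inserted last'' produce a point outside the disk of the other $i-1$. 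That gives $c=3$, and independence from the future is clear because the event depends only on the relative order of the first $i$ points.

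Next I would pin down $d(n)$, the depth of processing a special \step{}. Processing a special \step{} $i$ means running \textbf{Update1}: repeatedly scanning the prefix to find the earliest point outside the current two-point disk, each scan being an in-circle test across $O(i)$ points followed by a minimum (both $O(i)$ work, $O(1)$ depth \whp{} on the arbitrary CRCW PRAM via~\cite{Vishkin10}), and each ``found'' point triggering a call to \textbf{Update2}, which is again $O(i)$ work and $O(1)$ depth \whp{} (an in-circle filter and a maximum). By the same backwards-analysis argument used sequentially, the $j$th step of \textbf{Update1} calls \textbf{Update2} with probability at most $2/j$, so the number of \textbf{Update2} calls inside one \textbf{Update1} is $\sum_{j\le i} 2/j = O(\log n)$ in expectation, hence $O(\log n)$ \whp{} by a Chernoff bound. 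Chaining $O(\log n)$ calls each of $O(1)$ depth gives $d(n)=O(\log n)$ \whp{}, and the total work of one \textbf{Update1} is $O(i) + \sum_{j\le i}(2/j)\cdot O(j) = O(i)$ in expectation, so the per-\step{} work is $O(1)$ amortized in expectation as required.

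With these parameters in hand the conclusion is a direct appeal to Theorem~\ref{thm:type2}: the iteration dependence depth is $O(\log n)$ \whp{}, the parallel prefix-doubling implementation does $O(n)$ expected work, and the depth is $O(d(n)\log n) = O(\log n \cdot \log n) = O(\log^2 n)$ \whp{}, all on the arbitrary CRCW PRAM. I would close by noting the one subtlety that distinguishes this case from LP and closest pair: here $d(n)$ is itself only a high-probability bound (because \textbf{Update1} recursively does randomized work), rather than a deterministic $O(1)$ or $O(\log^* n)$, so when multiplying the $O(\log n)$ sub-rounds of Theorem~\ref{thm:type2} by the per-sub-round depth one must take a union bound over the $O(\log n)$ special \step{}s to keep the $O(\log^2 n)$ bound \whp{}. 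The main obstacle is therefore not any single hard inequality but rather being careful that the two independent sources of randomness --- the random insertion order driving the Type~2 analysis, and the randomness inside each \textbf{Update1} --- compose correctly in the high-probability statement; everything else is a routine substitution into the already-proven framework.
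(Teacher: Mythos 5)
Your overall route is exactly the paper's: verify the Type~2 structure with special-\step{} probability at most $3/i$, bound $d(n)$ by $O(\log n)$ via the $O(\log n)$ \whp{} chain of \textbf{Update2} calls, and plug into Theorem~\ref{thm:type2}. The gap is in the work bound for your parallel \textbf{Update1}. You describe it as repeatedly rescanning all $O(i)$ earlier points to find the earliest one outside the current disk, once per \textbf{Update2} call; with $\Theta(\log i)$ such calls (which is the order of the expected count for worst-case inputs), that is $\Theta(i\log i)$ work per special \step{}, and summing $\sum_i (3/i)\cdot O(i\log i)$ gives $O(n\log n)$ total expected work, not $O(n)$. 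The $O(i) + \sum_{j\le i}(2/j)\cdot O(j) = O(i)$ figure you then quote is the \emph{sequential} accounting, in which the scan inside \textbf{Update1} is a single pass that tests each $v_j$ once and only the \textbf{Update2} calls cost $O(j)$; it does not apply to the rescan-everything implementation you describe.

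The paper's fix is to apply the prefix-doubling schedule a second time, inside \textbf{Update1} itself: process $v_1,\dots,v_{i-1}$ in geometrically growing prefixes, and within the current prefix repeatedly find (by in-circle tests and a minimum) the earliest point outside the current disk and run \textbf{Update2} on it. The expected number of \textbf{Update2} calls triggered inside the prefix $[2^{k-1},2^k)$ is $\sum_j 2/j = O(1)$, so the work per prefix is $O(2^k)$ in expectation and totals $O(i)$, while the depth remains $O(\log n)$ \whp{} ($\log i$ prefixes plus the $O(\log n)$ \whp{} \textbf{Update2} calls, each sub-round having $O(1)$ depth). With that amendment your argument matches the paper's; your closing remark about composing the two sources of randomness in the \whp{} depth bound is a fair point that the paper leaves implicit.
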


The algorithm can be extended to $d$ dimension,
with $O(d!\log^dn)$ depth \whp{}, and
$O(c_dn)$ expected work for some constant $c_d$ that depends on $d$.
Again, we can use the same randomized order
for all sub-problems.

\section{Iterative Graph Algorithms (Type 3)}
\label{sec:graph}

In this section we study two sequential graph algorithms that can be
viewed as offline versions of randomized incremental algorithms.  We show
that the algorithms are Type 3 algorithms as described in
Section~\ref{sec:incalg3}, and also that \step{}s executing in parallel
can be combined efficiently. This gives us simple
parallel algorithms for the problems.
The algorithms use single-source shortest paths and reachability as
(black-box) subroutines, which is the dominating cost.
Our algorithms are within a logarithmic factor in work and depth
of a single call to these subroutines on the input graph.

% In the following sections we
% assume that the vertices $V=\{v_1,\ldots,v_n\}$ are given in a
% \emph{random order} (i.e. $v_1$ is the first vertex in the random
% order, $v_2$ is the second vertex, and so on).
%We say a vertex $u$ is \emph{earlier} or has \emph{higher priority} than $w$ if $u<w$.

%% will discuss the parallelism of two well-known
%% iterative graph algorithms by analyzing the \idg{}s of these
%% algorithms.  To be precise, these sequential algorithms are not
%% described using an incremental construction, but they can be
%% treated as the \emph{offline} versions of the incremental algorithms.
%% These algorithms fall into the third type (Type 3) of algorithms
%% described in Section~\ref{sec:incalg} so we ignore the dependencies
%% and allow recomputations.  We show that all of these algorithms
%% have the same work bounds in the parallel versions.

\subsection{Least-Element Lists}\label{sec:LE-lists}
The concept of Least-Element lists (LE-lists) for a graph (either
unweighted or with non-negative weights) was first proposed by
Cohen~\cite{cohen1997} for estimating the neighborhood sizes of
vertices. The idea has subsequently been used in many applications
related to estimating the influence of vertices in a network
(e.g.,~\cite{cohen2004,du2013} among many others), and generating
probabilistic tree embeddings of a graph~\cite{khan2012,FRTnew} which
itself is a useful component in a number of network optimization
problems and in constructing distance oracles~\cite{FRTnew,kanat}.
%\julian{In the title of the paper, should ``on'' be ``of''? That's right!}
For $d(u,v)$ being the shortest path from $u$ to $v$ in $G$, we have:

\begin{definition}[LE-list]
Given a graph $G=(V,E)$ with $V = \{v_1, \ldots, v_n\}$, the \defn{LE-lists}
are
\[L(v_i)=\left\{v_j \in V~|~ d(v_i,v_j) < \min_{k=1}^{j-1} d(v_i,v_k)\right\}\]
sorted by $d(v_i,v_j)$.
% Given a graph $G=(V,E)$ with $V$ in a random order, the LE-list of a
% vertex $v$ is
% $L(v)=\{u_1,\cdots,u_t\}$,
% where $d(v,u_1)<d(v,u_2)<\cdots<d(v,u_t)$, and $u$ is in $L(v)$
% if and only if $u =\arg\min_i\{u_i \ \vert\ d(v,u_i) \leq d(v,u)\}$.
\end{definition}

\begin{algorithm}[t]
\caption{The iterative LE-lists construction~\cite{cohen1997}}
\label{algo:lelist}
\KwIn{A graph $G=(V,E)$  with $V = \{v_1,\ldots,v_n\}$}
\KwOut{The LE-lists $L(\cdot)$ of $G$}
    \medskip
    Set $\delta(v)\leftarrow +\infty$ and $L(v) \leftarrow \varnothing$ for all $v\in V$\\
    \For{\upshape $i\leftarrow 1$ to $n$} {
       Let $S=\{u \in V~|~d(v_i,u)<\delta(u)\}$\label{stp:dijkstra}\\
       \For{$u \in S$}{
           $\delta(u)\leftarrow d(v_i,u)$ \\
           concatenate $\left<v_i,d(v_i,u)\right>$ to the end of $L(u)$
      }
    }
    \Return{$L(\cdot)$}
%    \vspace{.5em}
\end{algorithm}

In other words, a vertex $u$ is in vertex $v$'s LE-list if and only if
there are no earlier vertices (than $u$) that are closer to $v$.
Often one stores with each vertex $v_j$ in $L(v_i)$ the distance of
$d(v_i,v_j)$.

Algorithm~\ref{algo:lelist} provides a sequential iterative (incremental)
construction of the LE-lists, where the $i$'th \step{} is the $i$'th
iteration of the for-loop.  The set $S$ captures all vertices that
are closer to the $i$'th vertex than earlier vertices
(the previous closest distance is stored in $\delta(\cdot)$).
Line~\ref{stp:dijkstra} involves computing $S$ with a single-source
shortest paths (SSSP) algorithm (e.g., Dijkstra's algorithm for weighted graphs and BFS for unweighted graphs, or other algorithms~\cite{spencer1997time,ullman1991high,klein1997randomized,Blelloch16shortest} with more work but less depth). We note that the only minor change to these algorithms is to drop the initialization of the tentative distances before we run SSSP, and instead use the $\delta(\cdot)$ values from previous \step{}s in Algorithm~\ref{algo:lelist}.
Thus the search will only explore $S$ and its outgoing edges.
Cohen~\cite{cohen1997} showed that if  the vertices are in random order, then each
LE-list has size $O(\log n)$ \whp{}, and that using Dijkstra
with distances initialized with $\delta(\cdot)$,
the algorithm runs in $O(\log n (m + n \log n))$ time.

\myparagraph{Parallel version}  To parallelize the algorithm we use
the general approach of Type 3 algorithms as described in
Section~\ref{sec:incalg3} and in particular
Algorithm~\ref{alg:type3}.  We treat the shortest paths algorithm as a
black box that computes the set $S$ in depth $D_{\smb{SP}}(n',m')$ and
work $W_{\smb{SP}}(n',m')$, where $n'=|S|$ and $m'$ is the sum of the
degrees of $S$.   We assume the cost functions are concave, i.e.
$W_{\smb{SP}}(n_1,m_1)+W_{\smb{SP}}(n_2,m_2)\le W_{\smb{SP}}(n,m)$ for
$n_1+n_2\le n$ and $m_1+m_2\le m$, which holds for all existing
shortest paths algorithms.  We also assume independent shortest path
computations can run in parallel (i.e., they do not intefere with each
other's state).   We assume that the output of each shortest
path computation from a source is a set of source-target-distance
triples, one for each target that is visited in Line~3.

For the separating dependences: $v_j$ depends on $v_i$ if and only if
$v_i \in L(v_j)$, (i.e., was searched by $v_i$), and we use the total
orderings $i\less_k j$ if $d(v_k,v_i)<d(v_k,v_j)$.  This gives:

\begin{lemma}\label{lem:lelist}
Algorithm~\ref{algo:lelist} has a separating dependence for the
dependences and orderings $<_v$ defined above.
\end{lemma}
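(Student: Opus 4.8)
The plan is to verify the two conditions in Definition~\ref{def:sepdep} directly. Condition~(1) is immediate: for each vertex $v_k$ we take the ordering $i \less_k j$ iff $d(v_k,v_i) < d(v_k,v_j)$, which is a total order on $V$ because the graph is in general position (no two shortest-path distances from a fixed source are equal; or one can break ties by vertex index). Note these orderings genuinely differ across the $v_k$, which is exactly the setting the lemma is meant to exploit.

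For condition~(2), fix three vertices $a,b,c$ with $a \less_c b \less_c c$ (the case $c \less_c b \less_c a$ being symmetric), i.e. $d(v_c,v_a) < d(v_c,v_b) < d(v_c,v_c)$ — wait, more carefully, with the vertices renamed as the algorithm's elements: suppose $d(c,a) < d(c,b) < d(c,c)$ is the wrong reading; the right reading is that in $c$'s distance order, $a$ comes before $b$ comes before $c$, so $d(c,a) < d(c,b) < d(c,c)=0$ is impossible. The correct statement is $d(c,a) < d(c,b)$ with $b$ itself still closer to $c$ than $c$ is to itself is vacuous, so what condition~(2) really says is: $a$ is strictly closer to $c$ than $b$ is, and $b$ is strictly closer to $c$ than $c$'s own position in the order — here the order $\less_c$ is over all of $S$, and "$b\less_c c$" just means $d(c,b) < d(c,c)$, interpreting $d(c,c)$ appropriately, or more usefully it means $b$ lies on the shortest-path "between" $a$ and $c$. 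I would instead argue the clean way: $c$ depends on $a$ means $a \in L(c)$, which by the definition of LE-lists means $d(c,a) < \min_{k < a} d(c,k)$, i.e. $a$ is a prefix-minimum in $c$'s distance order. Now if $b$ is inserted first among $\{a,b,c\}$ and $d(c,a) < d(c,b) < d(c,a')$ for the relevant comparison, then when the algorithm processes $b$ it sets $\delta(a) \le d(c,a)$... no: processing $b$ updates $\delta(w) \leftarrow d(b,w)$ for $w \in S_b$; I need $d(c,a)$ compared against a distance from $b$. The correct mechanism: the dependence of $c$ on $a$ (meaning $a\in L(c)$) survives only if no earlier-inserted vertex $u$ satisfies $d(c,u) \le d(c,a)$; if $b$ with $d(c,b) < d(c,a)$... but the hypothesis is $d(c,a) < d(c,b)$, so that's backwards too.

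So the right reading of "$a \less_c b \less_c c$" must be that $b$ separates $a$ from $c$ in the sense that $d(c,a) < d(c,b)$ is NOT it — rather it is that going from $a$ outward, you pass $b$ before reaching $c$, meaning $d(c,b) < d(c,a)$... I will commit to: $b$ separates means $b$ is closer to $c$ than $a$ is, i.e. $d(c,b) < d(c,a)$ ("$b \less_c a$", and also $b$ is on a shortest path to $c$ so that inserting $b$ genuinely blocks the search). The argument then is: $a \in L(c)$ requires that every vertex $u$ inserted before $a$ has $d(c,u) > d(c,a)$ (else $a$ is not a prefix minimum and $c$ does not depend on $a$). If $b$ is inserted first among the three and $d(c,b) < d(c,a)$, then at the time $a$ is inserted, $\delta$-values reflect that $b$ (inserted earlier) already achieved distance $\le d(c,b) < d(c,a)$ to... no, $\delta$ tracks distance from inserted sources, and $b$ being a source gives $\delta(w) \le d(b,w)$. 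For $a\in L(c)$ the search from $a$ must reach $c$, i.e. $d(a,c) < \delta(c)$ at that moment; but $b$ inserted earlier made $\delta(c) \le d(b,c)$, so we need $d(a,c) < d(b,c)$ — contradicting $d(c,b) < d(c,a)$. Hence $c$ cannot depend on $a$, as required.

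I would therefore structure the proof as: (i) state the orderings and note they are total (general position / tie-break); (ii) unfold "$c$ depends on $a$" as "$a \in L(c)$", which via Line~\ref{stp:dijkstra} and the $\delta$-maintenance means $d(c,a) < \delta(c)$ at the moment $a$ is processed, hence $d(c,a) < d(c,u)$ for every $u$ processed before $a$; (iii) given $b$ with $b$ strictly between $a$ and $c$ in $\less_c$ (so $d(c,b) < d(c,a)$ in the relevant direction) and $b$ processed before $a$, derive $d(c,a) < d(c,b)$, a contradiction; conclude $a$ must be processed first among $\{a,b,c\}$. The main obstacle is purely expository — pinning down the exact orientation of the $\less_c$ order so that "$b$ separates $a$ from $c$" correctly matches the $\delta$-blocking mechanism of Algorithm~\ref{algo:lelist}; once the orientation is fixed, the contradiction is a one-line consequence of the prefix-minimum characterization of LE-lists together with the monotone-decreasing maintenance of $\delta(\cdot)$.
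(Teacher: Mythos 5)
Your final committed argument (steps (i)--(iii) at the end) is correct and is essentially the paper's proof: the case $a \less_c b \less_c c$ is vacuous since $d(v_c,v_c)=0$, and in the case $c \less_c b \less_c a$, if $b$ is inserted before $a$ then $d(v_c,v_b)<d(v_c,v_a)$ violates the prefix-minimum condition for $a\in L(v_c)$, so no dependence exists. The only thing to clean up is the expository meandering about the orientation of $\less_c$ (and the $d(a,c)$ versus $d(c,a)$ asymmetry, which the paper also glosses over); the final orientation you settled on is the right one.
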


\begin{proof}
By Definition~\ref{def:sepdep}, we need to show that
for any three vertices $v_a, v_b, v_c \in V$, if $v_a \less_c v_b \less_c
v_c $ or $v_c \less_c v_b \less_c v_a$, then $v_c$ can only be visited on $v_a$'s \step{} if
$v_a$'s \step{} is the first among the three.

Clearly the statement holds if $v_c$'s \step{} is the earliest among the
three.  We now consider the case when $v_b$'s \step{} is the first among
the three.  Since $d(v_c,v_c)=0$, $v_a \less_c v_b \less_c v_c$ cannot
happen, so we only need to consider the case $v_c \less_c v_b \less_c
v_a$.  Since $d(v_c,v_b)<d(v_c,v_a)$ and $b<a$, based on the
definition of the LE-lists, $v_a\notin L(v_c)$.  As a result, $v_c$
can only be visited in $v_a$'s \step{} if $v_a$'s \step{} is first among the
three.
\end{proof}

As required by Line~\ref{line:combine} of Algorithm~\ref{alg:type3},
we need to combine the results from the iterations in a
round---the sets of source-target-distance triples.  For LE-lists we
need to collect the contributions to each LE-list, remove the
redundant entries, and write the minimum distance to each vertex for
the next round.  There are redundant entries since running an
iteration early could find a path not found by the strict sequential
order.  Collecting the contributions to each LE-list can be done with
a semisort on the targets.  The elements corresponding to each
target can then be sorted based on the iteration number of the source
vertex.  In the sequential order, distances can only decrease with
increasing source iteration index. %\julian{cannot parse the following sentence}
Therefore, if any of the distances increase, they
correspond to redundant entries that are filtered out.  Finally, the
remaining elements are added to the appropriate LE-lists and the
minimum distance is written to each vertex.  This leads to the
following theorem:

\begin{theorem}
The LE-lists of a graph with the vertices in random order can be constructed in
$O(W_{\smbe{SP}}(n,m)\log n)$ expected work and
$O(D_{\smbe{SP}}(n,m)\log n)$ depth \whp{} on the CRCW PRAM.
%Each LE-list contains $O(\log n)$ elements.
\end{theorem}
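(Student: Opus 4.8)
The plan is to run Algorithm~\ref{algo:lelist} under the Type~3 schedule of Algorithm~\ref{alg:type3}, invoking Lemma~\ref{lem:lelist} (separating dependences) and Theorem~\ref{thm:type3}. Concretely, round $i$ of the schedule runs the shortest-path searches from the sources $v_{2^{i-1}},\dots,v_{2^i-1}$ in parallel, each search using the $\delta(\cdot)$ array left by round $i-1$ in place of a fresh initialization (the only modification to the SSSP subroutine noted after Algorithm~\ref{algo:lelist}), followed by the combining step; the schedule has $\lceil\log_2 n\rceil$ rounds by construction. For correctness I would argue the combining step directly: starting a source early only makes its ``previous minimum'' larger than in the strict sequential order, so round $i$'s searches can only \emph{over}-generate source--target--distance triples. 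After a semisort of those triples by target and a sort of each target's bucket by source index, a triple is a genuine new LE-list entry exactly when its distance strictly improves on all earlier triples in its bucket (every triple in the bucket already beats the incoming $\delta$ value by virtue of having been produced at all, and any earlier index \emph{not} in the bucket has a larger distance still); dropping the rest and writing the per-target minimum back into $\delta(\cdot)$ yields precisely the sequential $L(\cdot)$ and the state needed for round $i+1$.

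For the work bound, fix round $i$; for a target $v_t$ let $X_t$ be the number of round-$i$ sources that visit $v_t$, i.e.\ the number of incoming dependences to $v_t$ from round $i$. By Lemma~\ref{lem:type3} applied to left dependences, and its mirror image for right dependences, $\mathbb{E}[X_t]\le 2$ and $X_t=O(\log n)$ \whp{}. A search from source $v_k$ costs $W_{\smb{SP}}(|S_k|,m_k)$ where $S_k$ is its visited set and $m_k=\sum_{v\in S_k}\deg(v)$, and summing over the round-$i$ sources, $\sum_k|S_k|=\sum_t X_t$ and $\sum_k m_k=\sum_t X_t\deg(v_t)$. The superadditivity (``concavity'') assumption gives $\sum_k W_{\smb{SP}}(|S_k|,m_k)\le W_{\smb{SP}}\left(\sum_t X_t,\ \sum_t X_t\deg(v_t)\right)$; taking expectations, moving $\mathbb{E}[\cdot]$ inside by concavity (Jensen's inequality), and using $\mathbb{E}\left[\sum_t X_t\right]=O(n)$ and $\mathbb{E}\left[\sum_t X_t\deg(v_t)\right]=O(m)$ bounds the expected search work in round $i$ by $O(W_{\smb{SP}}(n,m))$. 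The combining work in round $i$ is a semisort plus per-bucket sorts of total size $O(\sum_t X_t\log X_t)=O(n)$ in expectation, hence dominated. Summing over the $\lceil\log_2 n\rceil$ rounds gives $O(W_{\smb{SP}}(n,m)\log n)$ expected work.

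For the depth, since independent searches do not interfere, round $i$ runs all its searches at once in depth $\max_k D_{\smb{SP}}(|S_k|,m_k)\le D_{\smb{SP}}(n,m)$; the combining step is a semisort together with sorts of buckets of size $O(\log n)$, taking $O(\log n)$ depth \whp{}, and processor allocation and compaction add $O(\log^* n)$ on the CRCW PRAM. So each round has depth $O(D_{\smb{SP}}(n,m)+\log n)=O(D_{\smb{SP}}(n,m))$ (any shortest-path routine, or simply touching the relevant part of the input, already uses $\Omega(\log n)$ depth), and over $\lceil\log_2 n\rceil$ rounds, with a union bound to preserve the \whp{} guarantees, the total depth is $O(D_{\smb{SP}}(n,m)\log n)$ \whp{}.

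The main obstacle is the work analysis: one cannot simply combine the global $O(n\log n)$ dependence bound of Theorem~\ref{thm:type3} with superadditivity, because $W_{\smb{SP}}(n\log n,\,m\log n)$ need not be $O(W_{\smb{SP}}(n,m)\log n)$; superadditivity must be applied \emph{within} a single round---where the expected dependence count is only $O(n)$---and the $O(\log n)$ per-round bounds summed afterwards. Making the interplay of superadditivity, Jensen's inequality, and the per-round dependence estimate of Lemma~\ref{lem:type3} precise is the crux, with the correctness of the over-generate-then-prune combining step the secondary point that needs care.
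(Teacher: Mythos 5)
Your proposal is correct and follows essentially the same route as the paper: the Type~3 schedule of Algorithm~\ref{alg:type3} applied to Algorithm~\ref{algo:lelist}, with Lemma~\ref{lem:lelist} and Theorem~\ref{thm:type3}/Lemma~\ref{lem:type3} bounding the per-vertex visit counts, and the same semisort-by-target, sort-by-source, filter-increasing-distances combining step. Your per-round accounting (expected $O(n)$ dependences per round, superadditivity applied within a round, then summed over $O(\log n)$ rounds) is in fact the careful reading of the paper's terser ``concavity gives $O(\log n)\cdot W_{\smb{SP}}(n,m)$'' step, and your closing observation about why the global $O(n\log n)$ dependence bound cannot be fed into superadditivity directly is exactly the right caveat; the only soft spot is that moving the expectation inside $W_{\smb{SP}}$ uses concavity in the Jensen sense, which the paper's superadditivity assumption does not formally supply (though it holds for the concrete shortest-path routines intended, and the paper's own proof elides the same point).
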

\begin{proof}
First we bound the cost of the algorithm excluding the post-processing
step.
Because of the separating dependences in Algorithm~\ref{algo:lelist} shown in Lemma~\ref{lem:lelist},
Theorem~\ref{thm:type3} indicates that each vertex is visited no more than $O(\log n)$ times in all iterations \whp{}, assuming a random input order of the vertices.
Namely, at most $O(\log n)$ searches visit each vertex and its neighbors.
Since we assume the concavity of the search cost, the overall work for all searches is $O(\log n)$ times $W_{\smbe{SP}}(n,m)$, the cost of the first search that visits all vertices.
\hide{, we can apply Theorem~\ref{thm:type3} and
Lemma~\ref{lem:lelist} with the convexity of the work complexity, and
this is within the claimed bounds.
\guy{more detail}}

The combining after each round requires a semisort on the target vertex, a sort
on the source vertex within each target, and a pass to remove duplicates.  The
semisort can be done in linear work and logarithmic depth~\cite{RR89,GSSB15}.
We now show that we can efficiently sort by source.
As shown in the proof of Theorem~\ref{thm:type3}, the probability that we need to sort $r$ elements for each vertex in one round is bounded by $2^{1-r}$.
Assume that we use a loose upper bound of quadratic work for sorting.
The expected work for sorting the elements for each vertex in one round is $\sum{2^{1-r}\cdot r^2}$ for $r\ge 1$, which solves to $O(1)$.
The only exception is that the vertex itself is always in its own LE-list.
This adds at most $O(\log n)$ to the work to compare it to all
elements in the list.  Thus the expected work to sort one LE-list in
expectation is $O(\log n)$, and is $O(n\log n)$ when summed across
all lists.
The work cost for sorting is dominated since we assume that $W_{\smbe{SP}}(n,m)$ is at least linear, and we need $O(\log n)$ reachability queries.
%% \julian{$W_{\smbe{SP}}(n,m)$ needs to be at least $O(n\log n)$, not linear, to dominate sorting} \yan{We equivalently need $O(\log n)$ reachability searches.}
The depth for sorting is $O(\log n)$ (from Section~\ref{sec:sorting} or~\cite{Cole88}), which is
within the claimed bounds.
\end{proof}

\newcommand{\vsub}{{\mathcal V}}
\newcommand{\vscc}{{V_{\text{\emph{scc}}}}}
\newcommand{\sscc}{{S_{\text{\emph{scc}}}}}

\subsection{Strongly Connected Components}\label{sec:SCC}
Given a directed unweighted graph $G=(V,E)$, a \defn{strongly connected component}
(SCC) is a maximal set of vertices $C\subseteq V$ such that for every
pair of vertices $u$ and $v$ in $C$, there are directed paths both
from $u$ to $v$ and from $v$ to $u$.  Tarjan's
algorithm~\cite{tarjan1972} finds all strongly connected components of
a graph using a single pass of depth-first search (DFS) in $O(|V|+|E|)$
work.  However, DFS is generally considered to be hard to
parallelize~\cite{reif1985}, and so a divide-and-conquer SCC
algorithm~\cite{coppersmith2003} is usually used in parallel
settings~\cite{hong2013fast,barnat2011computing,slota_ipdps2014_bfs,Tomkins2015}.
%\yan{Some more history will be added later.}

The basic idea of the divide-and-conquer algorithm is similar to
quicksort.  It applies forward and backward reachability queries for a
specific ``pivot'' vertex $v$, which partitions the remaining vertices
into four subsets of the graph, based on whether it is forward
reachable from $v$, backward reachable, both, or neither.  The subset
of vertices reachable from both directions form a strongly connected
component, and the algorithm is applied recursively to the three
remaining subsets.  Coppersmith et al.~\cite{coppersmith2003} show
that if the vertex $v$ is selected uniformly at random, then the algorithm
sequentially runs in $O(m \log n)$ work in expectation.

Although divide-and-conquer is generally good for parallelism, the challenge
in this algorithm is that the divide-and-conquer tree can be
very unbalanced.  For example, if the input graph is very sparse
such that most of the reachability searches only visit a few vertices,
then most of the vertices will fall into the subset of unreachable
vertices from $v$, creating unbalanced partitions with $\Theta(n)$
recursion depth.  Schudy~\cite{schudy2008finding} describes a technique to better balance the
partitions, which can bound the depth of the algorithm to be $O(\log^2 n)$ reachability queries.
Unfortunately, his approach requires a factor of $O(\log n)$ extra
work compared to the original algorithm, which is significant.
Tomkins et al.~\cite{Tomkins2015} describe another parallel approach, although
the analysis is quite complicated.\footnote{\small Tomkins et
   al.\,\cite{Tomkins2015} claim that their algorithm takes the same
   amount of work as the sequential algorithm, but it seems that
   there are errors in their analysis. For example, the goal of the analysis
   is to show that in each round their algorithm visits $O(n)$ vertices in expectation,
   which they claimed to imply visiting $O(m)$ edges in expectation.  This is not
   generally true since the vertices do not necessarily have the
   same probabilities of being visited.  Other than this, their work
   contains many interesting ideas that motivated us to look at this
   problem.}

The reason that we can design a simple algorithm and bound the depth of the recursion depth is based on the following intuition:
the divide-and-conquer algorithm~\cite{coppersmith2003} can also be
viewed as an incremental algorithm, and we describe this version in
Algorithm~\ref{algo:seq}.  The two versions are equivalent since a random ordering is equal to picking
the pivots at random.   Therefore, we can analyze it as a
Type 3 algorithm, using the general theorem shown in Section~\ref{sec:incalg3}.
Our analysis is significantly simpler than those
of~\cite{schudy2008finding,Tomkins2015}, and the asymptotic work of
our algorithm matches that of the sequential algorithm.

\begin{algorithm}[t]
\caption{The sequential iterative SCC algorithm}
\label{algo:seq}
\KwIn{A directed graph $G=(V,E)$ with $V = \{v_1,\ldots,v_n\}$.}
%and a random permutation $\pi: [n]\rightarrow V$.}
\KwOut{The set of strongly connected components of $G$.}
    \medskip
%    Set $\delta(v)\leftarrow +\infty$ for all $v\in V$\\
    $\vsub\gets\{\{v_1,v_2,\ldots,v_n\}\}$~~~~~~(Initial Partition)\\
    $\sscc\gets\{\}$\\
    \For{\upshape $i\leftarrow 1$ to $n$}
    {
        Let $S\in \vsub$ be the subgraph that contains $v_i$\\
        \lIf {$S=\varnothing$} {go to the next iteration}
        $R^+\leftarrow\mf{Forward-Reachability}(S,v_i)$\label{line:forward}\\
        $R^-\leftarrow\mf{Backward-Reachability}(S,v_i)$\label{line:backwards}\\
        $\vscc\leftarrow R^+\cap R^-$\label{line:getSCC}\\
        $\vsub\leftarrow \vsub \backslash \{S\}\cup\{R^+ \backslash \vscc, R^- \backslash \vscc, S\backslash (R^+\cup R^-)\}$\\
%        $~~~~~~\mb{BWD}(s,u)\backslash\mb{SCC},s\backslash(\mb{FWD}(s,u)\cup\mb{BWD}(s,u))\}$\\
        $\sscc\leftarrow \sscc \cup \{\vscc\}$\label{line:addSCC}\\
     }
    \Return{$\sscc$}
%    \vspace{.5em}
\end{algorithm}

As in previous work on parallel SCC algorithms, we treat the algorithm
for performing reachability queries as a black box with $W_R(n,m)$
work and $D_R(n,m)$ depth, where $n$ are the number of reachable
vertices and $m$ is the sum of their degrees.  It can be implemented
using a variety of algorithms with strong theoretical
bounds~\cite{spencer1997time,ullman1991high} or simply with a
breadth-first search for low-diameter graphs.  We also assume
convexity on the work $W_R(n,m)$, which holds for existing
reachability algorithms, and that independent reachability
computations can run in parallel.

%% Again, our parallel algorithm processes vertices in batches of
%% geometrically increasing size.
%% Operations other than reachability
%% queries have constant depth, so the depth is $O(D_R(n,m)\log n)$.
%\julian{Should say more about the depth. The bound is only clear if the depth per round for other operations is at most $D_R(n,m)$, and it is not clear what $D_R(n,m)$ is.}
We first show that the algorithm has \sepdep{}.
Here a dependence from $i$ to $j$ corresponds to a forward or backward reachability search
from $i$ visiting $j$  (Lines~\ref{line:forward} and~\ref{line:backwards} in Algorithm~\ref{algo:seq}).  Let
$T=(t_1,t_2,\ldots,t_n)$ be an arbitrary topological order of components
in the given graph $G$, in which vertices of the same component are arbitrarily
ordered within the component.  $T$ is not constructed explicitly, but only used in
analysis.  To define the total order for vertex $v_i$, i.e.,
$<_{v_i}$, we
take all vertices of $T$ that are forward or backward reachable from $v_i$
(including $v_i$ itself) and
put them at the beginning of the ordering (maintaining their relative order), and put the unreachable vertices after them.
%% pack them, maintaining order, to the beginning of the
%% ordering, and pack the rest to the end.   \julian{What does ``to the beginning of the ordering'' mean?} \yan{I guess it means to the left side?}
Given this ordering, we have the following lemma.

\begin{figure}[t]
\begin{center}
  \includegraphics[width=.5\columnwidth]{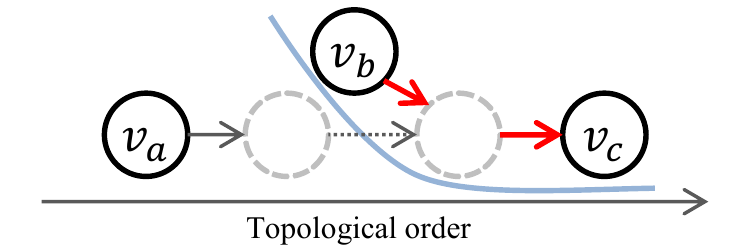}
\end{center}\vspace{-1em}
\caption{An illustration for the proof of Lemma~\ref{lem:scc}.}\label{fig:scc}
\vspace{-.5em}
\end{figure}

\begin{lemma}\label{lem:scc}
Algorithm~\ref{algo:seq} has a separating dependence for the
dependences and orderings $<_v$ defined above.
\end{lemma}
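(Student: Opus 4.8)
The plan is to verify Definition~\ref{def:sepdep} for the orderings $<_v$ just defined, following the same case analysis as in the proof of Lemma~\ref{lem:lelist}. Fix vertices $v_a,v_b,v_c$ with $v_b$ strictly between $v_a$ and $v_c$ in $<_{v_c}$ (so either $v_a<_{v_c}v_b<_{v_c}v_c$ or $v_c<_{v_c}v_b<_{v_c}v_a$), and split on which of the three is processed first. If $v_c$ is processed first, there is no dependence from $v_a$ to $v_c$ at all: such a dependence would mean a reachability search launched when $v_a$ is processed visits $v_c$, and that search runs strictly after $v_c$'s step. So the only remaining case is when $v_b$ is processed before both $v_a$ and $v_c$, and we must show $v_c$ does not depend on $v_a$.

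For that case I would first record two structural invariants of Algorithm~\ref{algo:seq}, proved by induction on the steps: (1) the partition $\vsub$ is only ever refined, and a vertex once placed into an SCC is never visited again; and (2) every current part $S$ is closed under directed paths among its members, i.e., if $u,w\in S$ and $G$ has a directed path from $u$ to $w$ then all of its vertices lie in $S$ (on a split by pivot $v_i$, each of the four pieces $R^+(v_i)$, $R^-(v_i)$, $R^+(v_i)\cap R^-(v_i)$ and $S\setminus(R^+(v_i)\cup R^-(v_i))$ is individually path-closed in $S$). Invariant (2) also gives that the forward/backward reachability searches from $v_i$ within $S$ return exactly the global forward/backward reachable set of $v_i$ intersected with $S$. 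Now, if $v_c$ depends on $v_a$ then $v_a$ and $v_c$ lie in a common part when $v_a$ is processed, so by (1) they are together throughout, including when $v_b$ is processed earlier; the betweenness hypothesis on $<_{v_c}$, together with how $<_{v_c}$ was built from the topological order $T$, forces $v_b$ to lie on a directed path between $v_a$ and $v_c$, so by (2) $v_b$ is in that same part at the time it is processed. Processing $v_b$ then separates $v_a$ from $v_c$: in the split by $v_b$, one of $v_a,v_c$ enters $R^+(v_b)$ and the other does not, or one enters $R^-(v_b)$ and the other does not, or one lands in the removed SCC $R^+(v_b)\cap R^-(v_b)$. Since parts never re-merge, $v_c$ is no longer with $v_a$ when $v_a$ is processed, so $v_a$'s search cannot reach $v_c$, i.e., $v_c$ does not depend on $v_a$.

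I expect the real work to be in the middle step: turning the purely order-theoretic hypothesis ``$v_b$ is between $v_a$ and $v_c$ in $<_{v_c}$'' into the reachability statement ``$v_b$ lies on a directed path from $v_a$ to $v_c$ (or from $v_c$ to $v_a$)'', which is exactly what justifies that $v_b$ shares a part with $v_a$ and $v_c$ when it runs. This needs a careful sub-case analysis of where $v_a$ and $v_b$ fall in $<_{v_c}$ --- both backward-reachable from $v_c$, both forward-reachable, or in the unreachable tail --- and, more delicately, it needs to rule out that an earlier pivot outside $\{v_a,v_b,v_c\}$ has already pulled $v_b$ into a different part from $v_a$ and $v_c$ (merely being between two vertices in a topological order does not put one on an actual directed path between them). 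I would address this either by choosing the topological order $T$ carefully relative to the reachability structure, or by arguing directly about how the partition evolves over the steps preceding $v_b$'s; the bulk of the casework, and the one genuinely subtle point, lives here.
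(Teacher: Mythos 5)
Your overall strategy matches the paper's: reduce to the case where $v_b$ is processed first among the three, and show that $v_b$'s step separates $v_a$ from $v_c$ so that $v_a$'s later search cannot visit $v_c$. Your two invariants (the partition is only refined; each part is closed under directed paths among its members) are correct and worth stating explicitly. The divergence, and the gap, is in how you get $v_b$ into the same part as $v_a$ and $v_c$ at the time $v_b$ runs. You propose to deduce from $v_a <_{v_c} v_b <_{v_c} v_c$ that $v_b$ lies on a directed path between $v_a$ and $v_c$, and then invoke path-closure. That deduction is false: take $v_a \to v_c$ and $v_b \to v_c$ with no path in either direction between $v_a$ and $v_b$; a valid topological order $T$ may still place $v_b$ between $v_a$ and $v_c$, so the order hypothesis holds while $v_b$ is on no directed path from $v_a$ to $v_c$. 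The paper never argues via ``on a path.'' Its route is: if $v_b$ shares an SCC with $v_a$ or $v_c$, that vertex is extracted at $v_b$'s step; otherwise, when $v_a <_{v_c} v_b <_{v_c} v_c$, $v_b$ can reach $v_c$ but cannot reach $v_a$ (their SCCs differ and $v_a$'s precedes $v_b$'s in $T$), so the forward search from $v_b$ puts $v_c$ into $R^+(v_b)$ and leaves $v_a$ outside it, splitting them; and when $v_c <_{v_c} v_b <_{v_c} v_a$ there is simply no forward dependence from $v_a$ to $v_c$ to worry about (backward by symmetry).

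The second concern you raise---that an earlier pivot may already have pulled $v_b$ into a different part from $\{v_a,v_c\}$ while leaving $v_a$ and $v_c$ together, so that $v_b$'s step touches neither and separates nothing---is genuine, and neither of your proposed fallbacks (choosing $T$ carefully, or tracking how the partition evolves before $v_b$'s step) is carried out. It does bite: in the example above, add a vertex $v_d$ with the single extra edge $v_b \to v_d$, fix $T = (v_a, v_b, v_d, v_c)$, and process $v_d$ first. Its backward search isolates $\{v_b\}$ while leaving $\{v_a, v_c\}$ intact; $v_b$'s step is then vacuous, and $v_a$'s later forward search still visits $v_c$ even though $v_b$ ran earlier and $v_a <_{v_c} v_b <_{v_c} v_c$. (The paper's own write-up asserts without justification that ``the forward reachability search from $v_b$ reaches $v_c$,'' so it silently assumes the three vertices still share a part at $v_b$'s step; your proof is the one that makes this assumption visible, but it neither discharges it nor can do so with the ``on a path'' claim.) To repair the argument you would need to replace that claim with the reachability argument above and, separately, close the earlier-pivot case---for instance by constraining the choice of $T$ relative to the reachability structure---rather than leaving it as a flagged difficulty. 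As written, the proposal's load-bearing step is missing.
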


\begin{proof}
By Definition~\ref{def:sepdep}, we need to show that
for any three vertices $v_a, v_b, v_c \in V$, if $v_a \less_c v_b
\less_c v_c $ or $v_c \less_c v_b \less_c v_a$, $v_c$ can only be
reached (forward or backward) in $v_a$'s \step{} if $v_a$'s \step{} is the first
among the three.

Clearly the statement is true if $v_c$ is earliest.  We now consider
the case when $v_b$'s \step{} is first among the three vertices.  We give
the argument for the forward direction, and the backward direction is
true by symmetry.  If $v_b$ is in the same SCC as either $v_a$ or
$v_c$, then in $v_b$'s \step{}, either $v_a$ or $v_c$ is marked in one SCC that $v_b$ is in,
and removed from the subgraph set $\vsub$.  Otherwise, since $v_b$ and
$v_a$ are not in the same SCC, when $v_c <_c v_b <_c v_a$, $v_c$
cannot be forward reachable in $v_a$'s \step{}, and when $v_a \less_c v_b
\less_c v_c$, $v_a$ cannot be forward reachable from $v_b$'s \step{}.  In
the second case, after $v_b$'s \step{}, the forward reachability search
from $v_b$ reaches $v_c$ but not $v_a$, and so $v_a$ and $v_c$ fall into
different components in $\vsub$ (shown in Figure~\ref{fig:scc}).  As a
result, $v_c$ is also not reachable in $v_a$'s \step{}.

In conclusion, $v_c$ can only be
reached (forward or backward) in $v_a$'s \step{} if $v_a$'s \step{} is first
among the three.
\end{proof}

This separating dependence implies that the sequential
algorithm on a random ordering does $O(m \log n)$ work since each
vertex $v_j$ is visited by no more than $\sum_{i=1}^{j} 2/i = O(\log n)$ times in
expectation shown by Lemma~\ref{lem:sep}, and the upper bound of this expectation is
independent of the degrees.  Since $W_R(n,m) = O(m)$ sequentially,
e.g., using BFS, this algorithm uses $O(m \log n)$ work on expectation.

We now consider the parallel version.  We use the general approach of
Type 3 algorithms as described in Section~\ref{sec:incalg3} and in
particular Algorithm~\ref{alg:type3}.  The \step{}s we run in parallel
for each round (consisting of increasing power of two) are the same as
in Algorithm~\ref{algo:seq}.  To implement the parallel version, we
need a way to efficiently combine the \step{}s that run in the same
round.  Based on our assumption, the reachability queries for each
vertex in the round can run independently in parallel, and so we run
the searches based on the partitioning of the vertices $\vsub$ from
the previous round. For each direction, each search does a priority write with its ID in a temporary location to all the vertices it visits.
We can then identify each strongly connected
component by checking the reachability
information on vertices.
Vertices belonging in an SCC will have the ID of the highest priority search written in both of its temporary locations (one for each search direction), and vertices with the same ID written belong to the same SCC.
%% \julian{specify what is $v_i$ here} All searches from the vertices in $\vscc$
%% will all identify this SCC.
%% Each vertex keeps a temporary value, and each search priority writes to the values of all
%% vertices in $\vscc$ it recognizes.
%% The search with the highest priority wins and hence $\vscc$ can be retrieved by checking the values of the vertices. \julian{need more details on why we use priority write and why we need to find a winner}
To partition the graph, any
edge between two vertices, where one is reachable and the other is
not, in any of the reachability queries is cut.  Each reachability
search can identify and cut its own edges (some edges might be cut
multiple times).  This implementation is more aggressive than the the
sequential algorithm, but this will only help.  If required, the exact
intermediate states of the sequential algorithm can also be
maintained.  After all $O(\log n)$ rounds are complete, we group the
vertices to form SCCs based on their vertex labels.  This requires
linear work and $O(\log n)$ depth~\cite{RR89,GSSB15}.

% To partition the graph, all edges with endpoints that hold different
% data values or differ in whether they are forward or backward
% reachable are removed (marked) from the graph, and this disconnects
% the graph into appropriate subgraphs.  Also, all nodes that are
% reachable from the sources in both directions, which can be checked by
% their data fields, form SCCs and are removed (marked) from graph.  All of
% these extra steps take linear work and constant depth.

%% Furthermore the work is at most proportional to the number of edges
%% that are visited. \julian{plus the reachability queries} \yan{I think the cost of the reachability queries is also proportional to the number of visited edges?  Although I don't see why this sentence makes sense here.} \julian{It says the work of the algorithm is at most proportional to the number of edges, but that is only if the parallel reachability queries are also. We must use a work-efficient reachability algorithm for this, but we are not specifying which algorithm to use. So maybe we cannot say the last sentence as an advantage of the algorithm.}
%% \yan{Right.  Our previous analysis in the original submission shows that the work in each round is the same.  This new analysis cannot show this, so this argument is not useful anymore.}

\begin{theorem}
For a random order of the input vertices, the incremental SCC algorithm does
$O(W_R(n,m)\log n)$ expected work and has $O(D_R(n,m)\log n)$ depth
on a priority-write CRCW PRAM.
\end{theorem}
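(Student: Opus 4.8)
The plan is to instantiate the general Type~3 machinery on the incremental SCC formulation of Algorithm~\ref{algo:seq}, and then account for the per-round reachability cost and the combining cost. First, Lemma~\ref{lem:scc} establishes that Algorithm~\ref{algo:seq} has separating dependences with respect to the orderings $<_v$ built from a fixed topological order of the components (with a dependence from $i$ to $j$ meaning a forward/backward search from $v_i$ visits $v_j$). Hence Theorem~\ref{thm:type3} applies to the parallel version, i.e., Algorithm~\ref{alg:type3} instantiated with the reachability-based \step{}s: it runs in $\lceil\log_2 n\rceil$ rounds, and every vertex $v_j$ receives $O(\log n)$ incoming dependences \whp{} — equivalently, $v_j$ (together with its incident edges) is explored by at most $O(\log n)$ reachability searches over the whole execution, with $O(n\log n)$ total exploration steps \whp{}, hence also in expectation. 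I would stress that this already absorbs the fact that the parallel algorithm is "more aggressive'' than the sequential one: running a round's searches on the partition $\vsub$ from the \emph{start} of the round can make a search explore vertices that the strict sequential order would have separated out, but by Lemma~\ref{lem:type3} the expected redundancy contributed by any single round is $O(1)$ per vertex, which is exactly the sub-exponential tail the proof of Theorem~\ref{thm:type3} relies on.

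Next I would bound the work. Everything besides the reachability searches is bookkeeping: in each round, a priority write of each search's identifier to its visited vertices ($O(1)$ depth, linear work); a semisort grouping vertices by their (forward-identifier, backward-identifier) pair to extract the SCCs discovered that round; for each search, cutting the edges crossing its reachable/unreachable frontier and updating the hashed partition $\vsub$; and, once at the end, a final semisort to assemble the per-vertex labels into components. Semisort costs linear work and $O(\log n)$ depth~\cite{RR89,GSSB15}, so over $\lceil\log_2 n\rceil$ rounds the bookkeeping is $O((n+m)\log n)$ work, which is dominated since $W_R(n,m)=\Omega(n+m)$. For the searches, a search visiting $n'$ vertices with total degree $m'$ costs $W_R(n',m')$; since each vertex is explored $O(\log n)$ times across all rounds, $O(1)$ of them in expectation per round by Lemma~\ref{lem:type3}, and using the convexity assumed on $W_R$ (subadditivity over disjoint pieces together with concavity in the arguments), the expected total search work telescopes to $O(\log n)$ times $W_R(n,m)$, the cost of one search over the whole graph — mirroring the corresponding argument in the LE-lists proof. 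This gives expected work $O(W_R(n,m)\log n)$.

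For the depth: there are deterministically $\lceil\log_2 n\rceil$ rounds, and within a round the independent reachability searches run concurrently, so the search phase of a round has depth $\max_i D_R(n_i,m_i)\le D_R(n,m)$ by monotonicity of $D_R$; the priority writes, label extraction via semisort, edge cutting, and partition update add $O(\log n)$ depth, which is within $O(D_R(n,m))$ for any reachability routine. Summing over the rounds yields depth $O(D_R(n,m)\log n)$ on the priority-write CRCW PRAM. The step I expect to require the most care is the combining: arguing that priority-writing each search's identifier and then grouping vertices by their forward/backward identifiers correctly recovers exactly the SCCs discovered in that round and produces a valid refinement of $\vsub$ for the next round, and that — because the parallel algorithm cuts a superset of the edges the sequential algorithm cuts in the corresponding iterations — this "over-partitioning'' never merges vertices that belong to distinct components, so correctness is preserved while, by Theorem~\ref{thm:type3}, the extra redundancy stays a constant factor in expectation.
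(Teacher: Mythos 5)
Your proposal is correct and follows essentially the same route as the paper: invoke Lemma~\ref{lem:scc} to get separating dependences, apply Theorem~\ref{thm:type3} to bound each vertex's visits by $O(\log n)$, use the subadditivity/convexity of $W_R$ to charge the search work to $O(\log n)$ whole-graph reachability calls, and observe that the combining/bookkeeping work and depth are dominated. The paper's proof is just a terser version of the same argument (it defers the semisort-based grouping to a single pass at the end rather than per round, but this does not change the bounds).
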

\begin{proof}
The overall extra work is $O(m \log n)$ and no more than the work for executing the reachability queries.
%\julian{do you mean no more than the overall work for performing reachability queries?}
The depth for the additional operations is constant in each round and $O(\log n)$ at the end of the algorithm.
Therefore if the input vertices are randomly permuted, we can apply
Theorem~\ref{thm:type3} with Lemma~\ref{lem:scc} and the
convexity of the work cost to bound the expected work and depth of the
algorithm.
\end{proof}

\myparagraph{Acquiring the same intermediate states as the sequential
  algorithm} The partitioning of the vertex sets in previously
discussed algorithm is more eager than the sequential algorithm.  When
determinism is needed, the same intermediate states in the sequential
algorithm can be retrieved in the parallel version.  Let's consider
the following case in one parallel round: vertex $z$ is forward
reached from $x$ and reached from $y$, and at the meantime $x$ has a
higher priority.  The search of $y$ affects $z$ if and only if $y$ is
also reached in $x$'s forward search.  The other direction is
symmetric.

With this observation, we can use the following algorithm to decide
the partitioning of the vertices after one round in the sequential
order.  After the searches in each round finish, we first check
whether each vertex is already in an SCC.  For the vertices not in an
SCC, we semisort pairs based on the source vertex of the search and
the reached vertex, and gather all searches that reach each vertex in
both directions.  For each vertex, we then sort these searches based
on the priorities of the source nodes, and use a scan to filter out
the searches that do not reach this vertex in the sequential
algorithm, based on the criteria above.  Then the partitions are
decided by the search with the lowest priority that reaches each
vertex.  Finally, we cut the edges based on the partitioning of the
vertex sets using $O(m)$ work and constant depth.  Using this approach
guarantees the same intermediate states of the partitions at the end of each round compared to the sequential algorithm.  The extra work in this step is the same as
the post-processing in LE-lists in Section~\ref{sec:LE-lists}, which
is dominated by the other parts in the algorithm.

\section{Conclusion}
In this paper, we have analyzed the dependence structure in a
collection of known randomized incremental algorithms (or slight
variants) and shown that there is inherently high parallelism in all of
the algorithms.  The approach leads to particularly simple parallel
algorithms for the problems---only marginally more complicated (if at
all) than some of the very simplest efficient sequential algorithms
that are known for the problems.  Furthermore the approach allows us
to borrow much of the analysis already developed for the sequential
versions (e.g., with regard to total work and correctness).  
%% While we
%% do not present a grand unifying theory of parallel incremental
%% algorithms, 
We presented three general types of dependences of algorithms, and
tools and general theorems that are useful for multiple algorithms
within each type.
We expect that there are many other algorithms that can
be analyzed with these tools and theorems.  
%% While the algorithms that we
%% study are incremental, we hope our contributions are not.

\section*{Acknowledgments}
This research was supported in part by NSF grants
CCF-1314590 and CCF-1533858, the Intel Science and Technology Center
for Cloud Computing, and the Miller Institute for Basic Research in
Science at UC Berkeley.

\bibliographystyle{abbrv}

\end{document}